\DeclarePairedDelimiterX{\Set}[2]{\{}{\}}{\;#1\mathclose{}\nonscript\;\delimsize|\nonscript\;\mathopen{}#2\;}
\DeclarePairedDelimiter{\prn}{(}{)}
\DeclareMathOperator{\diag}{diag}
\DeclareMathOperator{\trank}{term-rank}
\DeclareMathOperator{\e}{e}
\newtheorem{theorem}{Theorem}[section]
\newtheorem{lemma}[theorem]{Lemma}
\newtheorem{proposition}[theorem]{Proposition}
\theoremstyle{definition}
\newtheorem{definition}[theorem]{Definition}
\newtheorem{example}[theorem]{Example}
\newtheorem{remark}[theorem]{Remark}
\renewcommand{\ALG@name}{Algorithm}
\title{Structural Preprocessing Method for Nonlinear Differential-Algebraic Equations Using Linear Symbolic Matrices}
\author[1]{Taihei Oki}
\author[1]{Yujin Song}
\affil[1]{\small{Department of Mathematical Informatics, Graduate School of Information Science and Technology, University
of Tokyo, Tokyo 113-8656, Japan.}}
\date{March 2024}
\begin{document}

\maketitle
\begin{abstract}
  Differential-algebraic equations (DAEs) have been used in modeling various dynamical systems in science and engineering.
  There are several preprocessing methods that are needed before performing numerical simulations for DAEs, such as consistent initialization and index reduction.  Preprocessing methods that use structural information on DAEs run fast and are widely used.
  Unfortunately, structural preprocessing methods may fail when the system Jacobian, which is a functional matrix, derived from the DAE is singular.  

  To transform a DAE with a singular system Jacobian into a nonsingular system, several regularization methods have been proposed, which are based on combinatorial relaxation.
  Most of all existing regularization methods rely on symbolic computation to eliminate the system Jacobian for finding a certificate of singularity, resulting in much computational time.
  Iwata--Oki--Takamatsu~(2019) proposed a method (IOT-method) to find a certificate without symbolic computations.
  The IOT method approximates the system Jacobian by a simpler symbolic matrix, called a layered mixed matrix, which admits a fast combinatorial algorithm for singularity testing.
  Although the IOT method runs fast, it often overlooks the singularity of the system Jacobian since the approximation largely discards algebraic relationships among entries in the original system Jacobian.
     
  In this study, we propose a new regularization method extending the idea of the IOT method.
  Instead of layered mixed matrices, our method approximates the system Jacobian by more expressive symbolic matrices, called rank-1 coefficient mixed (1CM) matrices.  This makes our method more widely applicable than the existing method.
  We give a fast combinatorial algorithm for finding a singularity certificate of 1CM-matrices, making our regularization method free from symbolic elimination. 
  Our method is also advantageous in that it globally preserves the solution set to the DAE.
  Through numerical experiments, we confirmed that our method runs fast for large-scale DAEs from real instances.
\end{abstract}
\paragraph{Keywords.} differential-algebraic equations, initial value problem, index reduction, combinatorial relaxation, linear symbolic matrices, combinatorial matrix theory, combinatorial scientific computing

\section{Introduction}\label{sec:introduction}
A $k$th-order \emph{differential-algebraic equation}~(DAE) for a variable $x:\mathbb{R}\to \mathbb{R}^n$ is a system of differential equations written as
\begin{equation}\label{dae}
	f(x(t),\dot{x}(t),\dotsc,x^{(k)}(t),t)=0,
\end{equation}
where $f:\mathbb{R}^{n(k+1)+1}\to \mathbb{R}^n$ is a smooth function.  DAEs have been used in modeling various dynamical systems, such as electrical circuits, chemical reactions, and mechanical systems.  Solving DAEs numerically is an essential means for systems simulations~\cite{brenantext,hairertext}.

Numerically solving DAEs present several challenges compared to solving ordinary differential equations (ODEs), which are systems of equations in the form of $\dot{x}(t) = \varphi(x(t),t)$ for some function $\varphi: \mathbb{R}^{n+1} \to \mathbb{R}$.
One challenge for solving DAEs is to determine a \emph{consistent initial value}, which is a point $P=(x^*,\dot{x}^*,\dotsc,{x^*}^{(k)},t^*) \in \mathbb{R}^{n(k+1)+1}$ through which a unique solution $x(t)$ to~\eqref{dae} passes. This task is non-trivial because DAEs can have ``hidden'' algebraic constraints that appear when the presented equations are differentiated.
Even if a consistent initial value is given, we also encounter another difficulty in numerically solving \emph{high-index} DAEs.
The \emph{differentiation index}, or simply the \emph{index}, of a DAE is an indicator of the discrepancy between the DAE and ODEs.
More precisely, the index of a first-order DAE~\eqref{dae} at a point $P$ is the smallest nonnegative integer $\nu$ such that the equations obtained by combining the zeroth, first, $\ldots,\nu$th-order derivatives of each equation in~\eqref{dae} can be solved with respect to $\dot{x}(t)$ in the form of $\dot{x}(t)=\varphi(x(t),t)$ in a neighborhood of $P$.
Specifically, ODEs are DAEs with index zero.
It is empirically known to be hard to construct a general-purpose numerical scheme with reasonable accuracy for DAEs with index greater than one, and thus reducing the indices of DAEs is vital preprocessing prior to numerical integration.

Recent simulation software packages often adopt structural preprocessing methods, which use ``structural information'' on DAEs: the occurrence of each variable in each equation.
For example, Pantelides' and Pryce's consistent initialization methods~\cite{pantelides88,pryce01} determine the number of differentiations of each equation by solving the assignment problem on a bipartite graph constructed from structural information.
The Mattsson--Söderlind (MS) method~\cite{mattsson93} transforms a high-index DAE into a low-index system, based on Pryce's algorithm.
One problem common to these structural methods is that they may fail when the numerical and structural information on a DAE is inconsistent.
More precisely, the structural methods may not work correctly when the \emph{system Jacobian matrix}, a functional matrix constructed from the DAE and its structural information, is identically singular; see Section~\ref{sec_pre} for definition.

To cope with this problem, several methods have been proposed to convert DAEs with singular system Jacobians into DAEs with nonsingular system Jacobians~\cite{iwata19,oki23,tan17,wu13}; we refer to these methods as \emph{regularization methods}.  All existing regularization methods are based on the framework called \emph{combinatorial relaxation}, originally introduced by Murota~\cite{murota90} as an algorithm to compute the Puiseux-series solutions to determinantal equations.
The combinatorial relaxation iterates the following procedure until the system Jacobian $J$ gets nonsingular: (i) find a ``certificate of singularity'' of $J$ and (ii) modify the DAE using the certificate to eliminate the numerical cancellation inherent in $J$.
Here, different ``certificate of singularity'' are employed, varying with each regularization method.
For example, the LC- and ES-methods by Tan et al.~\cite{tan17} use (generally symbolic) non-zero vectors in the left and right kernels of $J$, respectively, and Oki's substitution and augmentation methods~\cite{oki23} search for a submatrix of $J$ with certain property.
To find these certificates, these methods perform the Gaussian elimination on $J$.
However, since the system Jacobians of nonlinear DAEs are functional matrices, this operation involves complicated symbolic computations for large-scale, complicated nonlinear DAEs, resulting in a prohibitive computational cost.
Furthermore, certificates of singularity may become invalid depending on points; for example, a symbolically non-zero vector in the left kernel of $J$ can be zero or ill-defined (e.g.\ a denominator in some component becomes zero) at some point.
In numerically solving a regularized DAE, approaching such a point makes the DAE ill-conditioned, necessitating the regularization to be redone with a new certificate.
This \emph{pivoting} operation further increases the total computational cost.

Addressing these issues, Iwata--Oki--Takamatsu~\cite{iwata19} proposed a regularization method (\emph{IOT method}) free from elimination of symbolic matrices.
Their method ``approximates'' a system Jacobian $J$ by a simpler symbolic matrix obtained by replacing functional (non-constant) entries with distinct symbols.
For instance, if $J$ is
\begin{equation*}\label{eq:intro-J}    J=\begin{pmatrix}1&-1&0&0\\-1&1&0&0\\0&0&\sin (x_3+x_4)&\sin (x_3+x_4)\\0&0&t&2t\end{pmatrix},
\end{equation*}
then the method constructs the following matrix:
\begin{equation*}
J_\mathrm{M}=\begin{pmatrix}1&-1&0&0\\-1&1&0&0\\0&0&\alpha_1&\alpha_2\\0&0&\alpha_3&\alpha_4\end{pmatrix},
\end{equation*}
where $\alpha_1, \dotsc, \alpha_4$ are distinct symbols.
We always have $\rank J \le \rank J_\mathrm{M}$ and the equality attains in this example.
The matrix $J_\mathrm{M}$ is a \emph{layered mixed matrix}~\cite{murota85,murotatext}, in which each symbol appears only once.
Thanks to this feature, we can find a certificate of singularity for $J_\mathrm{M}$ by solving a combinatorial optimization problem, called \emph{independent assignment}, without symbolic computations.
Specifically, the IOT method obtains a nonsingular constant matrix $U$ such that $UJ_\mathrm{M}$ as well as $UJ$ has a large all-zero submatrix, and then modifies the DAE using the certificate $U$.
Furthermore, unlike other methods, the IOT method globally retains the solutions of the DAE since $U$ is a constant matrix and then valid certificate at any point.
This \emph{global equivalence} property eliminates the need for pivoting.

The IOT method, however, replaces the different entries in the original system Jacobian $J$ with different symbols ignoring their algebraic relationships.
This may cause a failure of the method as $J_\mathrm{M}$ can be nonsingular while $J$ is singular; see Example~\ref{ex:iot-failure}.

In this paper, we propose a fast regularization method for nonlinear DAEs, extending the IOT method.
To capture more detailed algebraic relationships, our method approximates the system Jacobian by a \emph{linear symbolic matrix}: a symbolic matrix whose entries are affine expressions, where multiple occurrences of the same symbols are allowed.
That is, linear symbolic matrices have a form of $J_\mathrm{LS}=A_0+\alpha_1A_1+\cdots+\alpha_mA_m$, where $\alpha_1, \dotsc, \alpha_m$ are distinct symbols and $A_0, A_1, \dotsc, A_m$ are constant matrices.
Layered mixed matrices correspond to the case where only one entry of $A_i$ is nonzero for each $1\leq i\leq m$.
Unfortunately, for general linear symbolic matrices, finding a certificate of singularity has still been recognized as a non-trivial task in theoretical computer science~\cite{edmonds67}.
We observe that the ranks of the coefficient matrices $A_1, \dotsc, A_m$ except $A_0$ are typically low in several actual systems, including electrical circuits and control of multi-body systems.
Specifically, for some systems, the ranks of $A_1, \dotsc, A_m$ are all one.
Such linear symbolic matrices, referred to as \emph{rank-1 coefficient mixed matrices} (1CM-matrices), are more expressive than layered mixed matrices but still admit a combinatorial algorithm for singularity testing~\cite{ivanyos10}.
Based on this, our method further approximates the linear symbolic matrix $J_\mathrm{LS}$ by a 1CM-matrix $J_\mathrm{1CM}$ (if necessary) and then finds a certificate of singularity of $J_{\mathrm{1CM}}$.
We propose an $O((n+m)^3 \log(n+m))$-time algorithm for this, where we naively assume that multiplying two $n \times n$ matrices takes $O(n^3)$ time. 
Our algorithm is faster than the existing algorithm by Ivanyos et al.~\cite{ivanyos10} that runs in $O(n^5m)$ time, as we can assume $m \le n^2$ by moderate preprocessing.
Our method also possesses the global equivalence property.
Through numerical experiments, we confirmed that our method is applicable to many real instances and runs faster for large-scale DAEs.

Lastly, we would like to note that, while our method applies to a wider class of DAEs than the IOT method, it can still fail for some DAEs if the nonsingularity of $J$ and $J_\mathrm{1CM}$ are inconsistent.
For such DAEs, we need to switch to a more applicable regularization method such as the LC-, ES-, substitution, or augmentation methods~\cite{oki23,tan17}.
Nevertheless, our method offers a fast alternative that is worth trying before resorting to these slower methods.

\paragraph{Related Work.}
There are several other preprocessing methods for DAEs that are based on structural information.  The index reduction method by Unger et al.~\cite{unger95} computes the Jacobian of a DAE and reports that the index is zero if its determinant has at least one term in the expansion. This method may also fail if there is a discrepancy between the numerical and structural information.  In the $\sigma\nu$~(symbolic numeric) method by Chowdhry et al.~\cite{chowdhry04}, the singularity of the Jacobian is tested more precisely by retaining constant entries of the Jacobian.

The method by Yang et al.~\cite{yang22} regularizes polynomial DAEs without eliminating functional matrices using methodology from numerical algebraic geometry.  The method finds a specific point $P$ that satisfies the algebraic constraints of DAEs and then modifies the DAE in the same way as the augmentation method by Oki~\cite{oki23}. 

\paragraph{Organization.}
The remaining part of this paper is organized as follows.
Sections~\ref{sec_pre} and~\ref{subsec_reg} describe preliminaries on structural preprocessing methods and regularization methods for DAEs, respectively.
We present our method in Section~\ref{sec_prp} and report the overview of experimental results in Section~\ref{sec_num}.
Section~\ref{sec_conclusion} concludes this paper.

\paragraph{Notations.}
The sets of integers, non-negative integers, and reals are denoted by $\mathbb{Z}$, $\mathbb{Z}_{\ge 0}$, and $\mathbb{R}$, respectively, and we let $[n] \coloneqq \Set{i \in \mathbb{Z}}{1 \leq i \leq n}$ for $n \in \mathbb{Z}_{\ge 0}$.
The $i$th element of a vector $x$ is denoted by $x_i$ and the $(i, j)$ entry of a matrix $A$ is written as $A_{ij}$.
We denote by $\diag\{a_i\}$ the diagonal matrix whose $i$th diagonal element is $a_i$.
For a row subset $I$ and a column subset $J$, $A[I,J]$ is defined as the submatrix of $A$ obtained by extracting rows and columns in $I$ and $J$, respectively.
If $I$ is all the rows, we simply write $A[I, J]$ as $A[J]$.
The degree of a polynomial $f(s)$ is denoted by $\deg_s f(s)$.  We denote the derivative of a real function $x(t)$ by $\dot{x}(t)$ and the $l$th-order derivative by $x^{(l)}(t)$ for $l \in \mathbb{Z}_{\ge 0}$. 
For a vector-valued function $f: \mathbb{R}^m \to \mathbb{R}^n$ and $i \in [n]$, let $f_i: \mathbb{R}^m \to \mathbb{R}$ denotes the scalar-valued function with $f(x) = {(f_i(x))}_i$ for $x \in \mathbb{R}^m$.
When a function $f$ is identically zero, it is written as $f \equiv 0$.

\section{Structural Methods for DAEs}\label{sec_pre}

We describe structural methods for consistent initialization and index reduction to DAEs.
Consider a DAE $f=0$ in size $n$ given in~\eqref{dae}.
The \emph{$\sigma$-function} $\sigma_f:[n]^2 \to \mathbb{Z}_{\ge 0} \cup \{-\infty\}$ of $f$ is defined as
\begin{equation}
	\sigma_f(i,j)\coloneqq \max\Set*{l \in \mathbb{Z}_{\ge 0}}{\frac{\partial f_i }{\partial x_j^{(l)} }\not\equiv 0} \quad (i,j \in [n]),
\end{equation}
where $\max \emptyset \coloneqq -\infty$.  The \emph{structural graph} $G_f$ of $f$ is a weighted complete bipartite graph whose vertex classes are $U=\{u_1,\dotsc,u_n\},V=\{v_1,\ldots,v_n\}$ and whose edge $(u_i,v_j)$ is weighted as $\sigma_f(i,j)$ for $i,j \in [n]$.
If the DAE is a linear time-invariant DAE in the form of
\begin{equation}\label{lti}
f \coloneqq \sum _{l=0}^k A_l x^{(l)}(t)-g(t)=0
\end{equation} 
with constant matrices $A_0, \dotsc, A_k \in \mathbb{R}^{n \times n}$ and a smooth function $g: \mathbb{R} \to \mathbb{R}^n$, $\sigma_f(i,j)$ is equal to $\deg_s A(s)_{ij}$, where $A(s) \coloneqq \sum_{l=0}^k A_l s^l$ is the coefficient matrix of the Laplace transformation of~\eqref{lti}.

Structural methods solve the following optimization problem:
\begin{equation*}
   (\mathrm{D}_f):\left\{ \begin{aligned}
        \text{minimize}\quad &\sum_{j \in [n]}q_j -\sum_{i \in [n]}p_i\\
        \text{subject to}\quad &q_j-p_i \geq \sigma_f(i,j), \\
           & p_i,q_j \in \mathbb{Z}_{\ge 0}\quad (\forall i,j \in [n]),
   \end{aligned}\right.
\end{equation*}
which is equivalent to the LP (linear programming) dual of the maximum-weight perfect bipartite matching problem for $G_f$.  We denote the optimal value to $(\mathrm{D}_f)$ by $\hat\delta_f$, where $\hat\delta_f$ is set to be $-\infty$ if $(\mathrm{D}_f)$ is unbounded.
A DAE~$f=0$ is said to be \emph{structurally singular} if $\hat\delta_f=-\infty$ and \emph{structurally nonsingular} otherwise.
The structural singularity means that there is no one-to-one correspondence between the equations and the variables appearing in the DAE.  In this paper, only structurally nonsingular DAEs are treated.

Let $(p,q)$ be an optimal solution to $(\mathrm{D}_f)$.
The \emph{system Jacobian} of a DAE $f=0$ with respect to $(p,q)$ is a functional matrix defined by
\[
  J_f^{p,q}\coloneqq \prn*{\frac{\partial f_i^{(p_i)}} { \partial x_j^{(q_j)}} }_{ij} = \prn*{\frac{\partial f_i }{\partial x_j^{(q_j-p_i)}}}_{ij},
\]
where the last equality comes from Griewank's lemma~\cite[Lemma~3.7]{pryce01}.
The determinant of a system Jacobian does not depend on the choice of $(p,q)$~\cite{murota95}, whereas the system Jacobian itself changes. 

Pryce's algorithm~\cite{pryce01}, which is a reinterpretation of Pantelides' algorithm~\cite{pantelides88}, computes an optimal solution $(p, q)$ to $(\mathrm{D}_f)$ and returns a system of $n+\sum_{i=1}^n p_i$ equations obtained by collecting the zeroth, first, $\dotsc,p_i$th-order derivatives of each equation $f_i=0$.  The new system contains sufficient algebraic constraints of the original DAE in the sense that we can obtain a consistent initial value by solving the new system as a nonlinear equation with respect to $(x_1,\dotsc,x_1^{(q_1)},\dotsc,x_n,\dotsc,x_n^{(q_n)},t)$.
Unfortunately, Pryce's algorithm is inapplicable to DAEs whose system Jacobians are identically singular, i.e., $\det J_f^{p,q}\equiv 0$.
Similarly, the Mattsson--S\"{o}derlind (MS) method~\cite{mattsson93}, a structural index reduction method, also fails for DAEs whose system Jacobian is identically singular.
Such systems can arise from real problems, as we show in the following example.

\begin{figure}[tb]
    \centering
    \includegraphics[width=9cm]{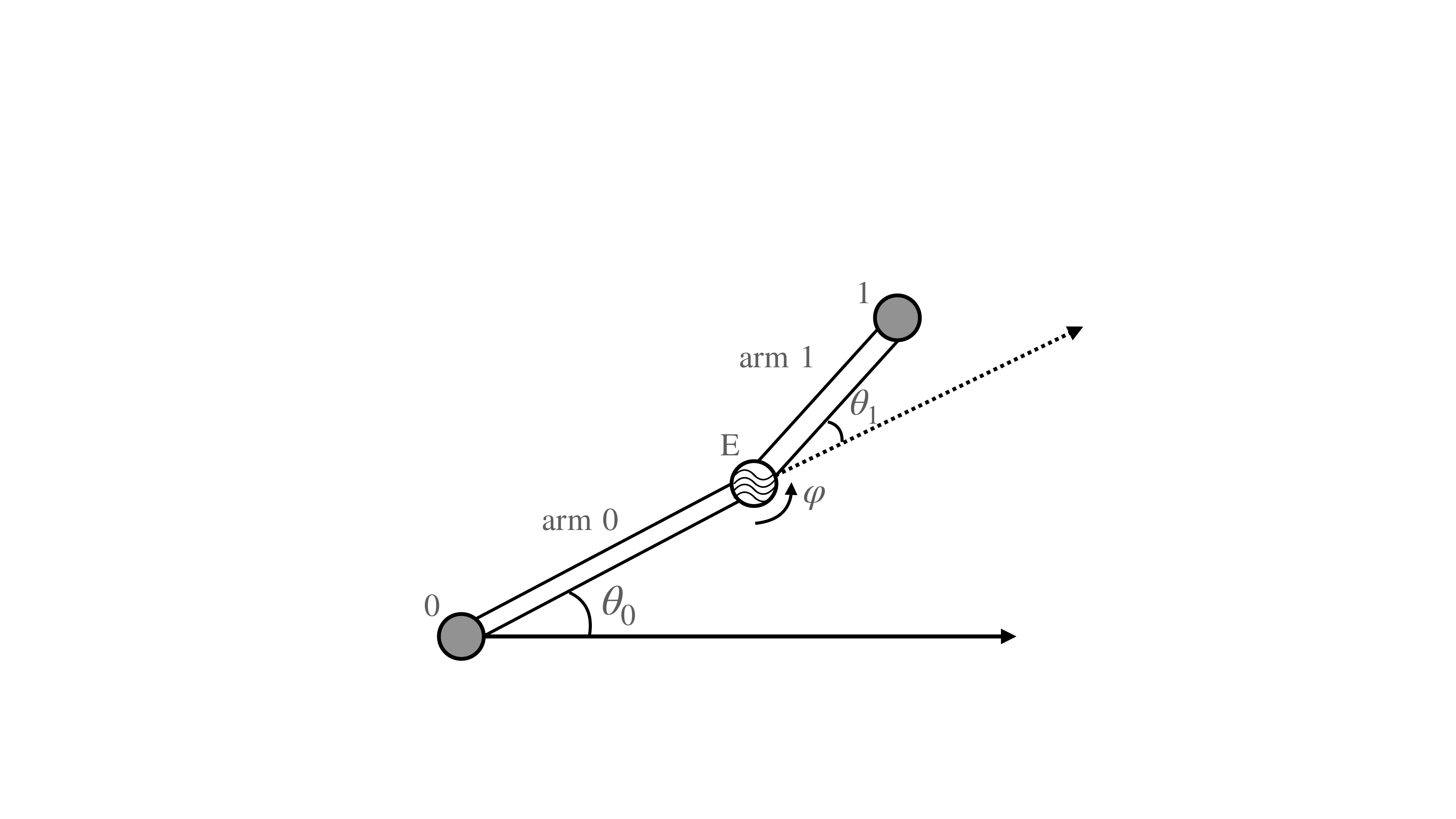}
    \caption{Robotic arm whose behavior is described by~\eqref{roboteq1}.}\label{fig:robotfig}
\end{figure}

\begin{example}[robotic arm~\cite{deluca88}]\label{robotex}
    Consider a robotic arm illustrated in Figure~\ref{fig:robotfig}.
    The joints at the fulcrum and the tip are denoted as joint 0 and 1, respectively, and the joint at the center is named as joint $\mathrm{E}$.  The arms are named as arms $0,1$, sequentially from the fulcrum.
    We denote the angle between the base and arm 0 as $\theta_0$, the angle between arms 0 and 1 as $\theta_1$, the rotation angle of the motor of joint $\mathrm{E}$ as $\varphi$, and the torque on joint $0$ and $\mathrm{E}$ as $\tau_0$ and $\tau_1$, respectively.
    We consider the path control problem, where the objective is to align the horizontal coordinates of joint $\mathrm{E}$ and 1 at time $t$ with some smooth function $p_1(t)$ and $p_2(t)$.
    A DAE expressing the dynamics of $x \coloneqq (\theta_0,\theta_1,\varphi,\tau_0,\tau_1)^\top$ is given by
    \begin{align}\label{roboteq1}
        \left\{\begin{aligned}
            \ddot{x}+\frac{1}{Q}\begin{pNiceMatrix}
                 1 & a         & -1 \\
                 a & Q/\beta+a^2 & a \\
                -1 & a         & Q/J+1
            \end{pNiceMatrix}\begin{pmatrix}
              -(2\dot\theta_0\dot\theta_1-\dot\theta_1^2)\gamma\sin\theta_1-\tau_0\\
              \gamma\dot\theta_0^2\sin\theta_1+K(\theta_1-\varphi/R)\\
              \frac{K}{R} (\varphi / R -\theta_1) - \tau_1
            \end{pmatrix}=0,\\
            l_0 \cos\theta_0=p_1(t),\\
            l_0 \cos\theta_0 + l_1\cos(\theta_0 + \theta_1)=p_2(t),
        \end{aligned}\right.
    \end{align}
    where $a=-(\beta+\gamma\cos \theta_1) / \beta$, $Q=\alpha+2\gamma\cos\theta_1-J-(\beta+\gamma\cos\theta_1)^2 / \beta$, and $\alpha, \beta, \gamma, R, K, l_0, l_1, J$ are non-zero constant parameters; see Section~\ref{sec:experiment-daes} for details.
    The $\sigma$-function of~\eqref{roboteq1} is given by
    \begin{equation}\label{eq:sigmafunc}
    \prn{\sigma_f(i,j)}_{ij}=\begin{pmatrix}
      2&1&0&0&0\\
      1&2&0&0&0\\
      1&1&2&0&0\\
      0&-\infty&-\infty&-\infty&-\infty\\
      0&0&-\infty&-\infty&-\infty
      \end{pmatrix}
    \end{equation}
    and one optimal solution to $(\mathrm{D}_f)$ is $p=(0,0,0,2,2), q=(2,2,2,0,0)$.
    Then, the system Jacobian $J_f^{p,q}$ is the following matrix
    \begin{align}\label{eq:ex-jacobian}
        \begin{pNiceMatrix}
            1& & &-1/Q&1/Q\\
             &1& &1/Q + \gamma \cos \theta_1 / \beta Q & -1/Q - \gamma \cos \theta_1 / \beta Q\\
             & &1&1/Q&-1/Q-1/J\\
            -l_0 \sin \theta_0 & & & & \\
            -l_0 \sin \theta_0- l_1 \sin(\theta_0 + \theta_1)& - l_1 \sin(\theta_0 + \theta_1)& & &
        \end{pNiceMatrix},
    \end{align}
    where empty cells are zero.
    The matrix~\eqref{eq:ex-jacobian} is identically singular since the sum of the fourth and fifth columns equals the third column divided by $-J$.
\end{example}

\section{Regularization Methods}\label{subsec_reg}
Regularization methods aim to transform a DAE with a singular system Jacobian into an equivalent DAE with a nonsingular system Jacobian.
This section describes existing regularization methods.

\subsection{Combinatorial Relaxation Framework}

The \emph{combinatorial relaxation}~\cite{murota90,murota95} is a general framework used in regularization for DAEs~\cite{iwata19,oki23,tan17,wu13}.
The combinatorial relaxation regularizes a DAE~\eqref{dae} through the following three phases.

\begin{enumerate}[{label={\textbf{Phase~\arabic*.}},labelindent=\parindent,leftmargin=*}]
    \item Solve the problem $(\mathrm{D}_f)$. If $\hat{\delta}_f = -\infty$, halt with failure.
    \item If the system Jacobian $J_f^{p,q}$ with respect to an optimal solution $(p, q)$ to $(\mathrm{D}_f)$ is nonsingular, return $f = 0$.
    \item Transform the DAE $f=0$ into an equivalent DAE $f'=0$ with $\hat\delta_{f'}< \hat\delta_f$. Go back to Phase~1.
\end{enumerate}

Since $\hat{\delta}_f \in \mathbb{Z}_{\ge 0}$ for structurally nonsingular DAEs, the above process ends in at most $\hat{\delta}_f \le kn$ iterations.
Phase~1 is executed by the Hungarian method in $O(n^3)$ time.
The nonsingularity testing in Phase~2 can be exactly performed via the symbolic Gaussian elimination on $J^{p,q}_f$, while this operation costs much computational time. Alternatively, we can check the nonsingularity of a constant matrix obtained by substituting random numerical values to symbols in $J^{p,q}_f$, while this approach may lead to an incorrect result due to numerical errors or unlucky substitution.
In the remaining of this section, we explain existing modification methods in Phase~3.

\subsection{Regularizing Linear Time-Invariant DAEs}\label{subsec_lti}

Wu et al.~\cite{wu13} presented a first regularization method for linear time-invariant DAEs~\eqref{lti}, relying on the combinatorial relaxation algorithm by Iwata~\cite{iwata03}.
Here, we explain Phase~3 of a regularization method for~\eqref{lti} based on the algorithm by Murota~\cite{murota95}.
By the Laplace transform, we rewrite~\eqref{lti} as $A(s)\tilde x(s)-\tilde g(s)=0$, where $A(s) \coloneqq \sum_{l=0}^k A_ls^l$, and $\tilde{x}(s)$ and $\tilde{g}(s)$ are the Laplace transformations of $x(t)$ and $g(t)$, respectively.

We introduce additional notions on matrices. 
The \emph{term-rank} of a matrix $A$, denoted by $\trank A$, is the maximum number of non-zero entries that can be chosen from $A$ such that the indices of rows and columns do not overlap.  It can be shown that $\rank A \leq \trank A$ always holds~\cite{edmonds67}.
For a square matrix $A$ in size $n$ and an integer sequence $p=(p_1,\dotsc,p_n)$, we call $A$ \emph{upper-triangular along $p$} if $A_{ij}=0$ holds for all pairs $i,j$ such that $p_i > p_j$.

Let $(p, q)$ be an optimal solution to $(\mathrm{D}_f)$ computed at Phase~2; the system Jacobian $J = J_f^{p,q}$ of~\eqref{lti} is a constant matrix.
By the Gaussian elimination, Murota's method finds a nonsingular matrix $U\in \mathbb{R}^{n \times n}$ such that $U$ is upper-triangular along $p$ and $\trank UJ < n$ holds.
Such $U$ exists if and only if $J$ is singular, i.e., $U$ is a ``certificate of singularity'' of $J$ in this sense.
Then, the method transforms $f = 0$ into a DAE $f' = 0$ whose Laplace transformation is given by
\begin{equation}\label{unimodtrans}
  U(s)A(s)\tilde{x}(s) - U(s)\tilde{g}(s) = 0,
\end{equation}
where $U(s) \coloneqq \diag\{s^{-p_i}\}U\diag\{s^{p_i}\}$.

Since $U(s)$ is a \emph{unimodular matrix}, i.e., a polynomial matrix whose inverse is also a polynomial matrix, the multiplication by $U(s)$ is an invertible operation.
That is, the solution sets of $f=0$ and $f'=0$ are identical.
Moreover, $\hat\delta_{f'}<\hat\delta_f$ holds as desired~\cite{murota90}; this essentially follows from the condition $\trank UJ <n$.

\subsection{Regularizing Nonlinear DAEs}
The above regularization method has been extended to nonlinear DAEs~\cite{iwata19,oki23,tan17}.
The LC-method by Tan et al.~\cite{tan17} finds a non-zero symbolic vector $v$ belonging to the left kernel of the system Jacobian $J$ and then transforms the DAE using $v$ in a way analogous to the transformation given in~\eqref{unimodtrans}, where $s$ in $U(s)$ is now regarded as the differential operator $D_t$.
Tan et al.~\cite{tan17} also proposes another method, the ES-method, which uses a non-zero vector in the right kernel of $J$ and modifies the DAE by change of variables.
However, the system Jacobian of a nonlinear DAE is no longer a constant matrix, requiring time-consuming symbolic computations for elimination to obtain a symbolic non-zero vector in the kernels.
Oki's substitution and augmentation methods~\cite{oki23} use a submatrix of $J$ with certain properties as a certificate of singularity, and these methods can be applied to DAEs in which the availability condition of the LC- or ES-method is not satisfied.
This certificate is also found via symbolic elimination, as implemented in~\cite{git}.

We here explain the Iwata--Oki--Takamatsu (IOT) method~\cite{iwata19}, which regularizes nonlinear DAEs without symbolic elimination.
Their idea is to ``approximate'' $J$ by a simpler symbolic matrix, called a \emph{layered mixed matrix}.
The IOT method assumes that a DAE is given as the sum of linear time-invariant terms and other terms as
\begin{equation}\label{half}
        \sum_{l=0}^k A_lx^{(l)}(t)+g(x(t),\dot{x}(t),\dotsc,x^{(k)}(t),t)=0,
\end{equation}
where $A_0, \dotsc, A_k$ are $n \times n$ constant matrices and $g$ is a smooth function; note that such a decomposition is not uniquely determined.

The IOT method first converts~\eqref{half} into an equivalent DAE 
\begin{equation}
    \left\{ \begin{aligned}\label{layer}
      y(t)+\sum_{l=0}^k A_lx^{(l)}(t)=0,\\
      -y(t)+g(x(t),\dot{x}(t),\dotsc,x^{(k)}(t),t)=0,
    \end{aligned} \right.
\end{equation}
where $(x(t), y(t))$ is a vector variable of dimension $2n$.
We call~\eqref{layer} the \emph{layer form} of~\eqref{half}.
The top $n$ rows of a system Jacobian $J$ of~\eqref{layer} consist of constants and the bottom $n$ rows may contain functional entries.
The method converts $J$ into a simpler symbolic matrix $J_\mathrm{M}$ by replacing the non-zero entries in the bottom $n$ rows of $J$ with distinct symbols $\alpha_1,\dotsc,\alpha_m$.
The resulting matrix $J_\mathrm{M}$ is a \emph{layered mixed matrix} defined as follows.

\begin{definition}[layered mixed matrix~\cite{murota85}]
    A \emph{layered mixed matrix} is a matrix $A$ over a rational function field $\mathbb{R}(\alpha_1, \dotsc, \alpha_m)$ in the form of $A = \binom{Q}{T}$ such that (i) each entry in $Q$ belongs to $\mathbb{R}$ and (ii) each nonzero entry in $T$ is a distinct symbol $\alpha_i$.
\end{definition}

The rank of a layered mixed matrix can be deterministically calculated without symbolic computations by solving a combinatorial optimization problem, called the \emph{independent matching problem}~\cite{murotatext}, which is a minor extension of the \emph{matroid intersection problem}.
By Cunninghum's algorithm~\cite{cunningham83}, we can solve the independent matching problem in $O(n^3\log n)$ operations over $\mathbb{R}$.
Furthermore, the following rank formula holds for layered mixed matrices, which follows from the min-max theorem of independent matching.

\begin{theorem}[{\cite[Theorem 4.2.3]{murotatext}}]\label{rankofmixed}
    Let $A=\binom{Q}{T}$ be a layered mixed matrix with columns $C$ and $m_Q$ be the number of rows of $Q$.
    Then, we have
    \begin{equation}\label{lmrankdual}
        \rank A = \min_{I \subseteq C,|I|=m_Q}\{\rank Q[I]+ \trank T[I]-|C\setminus I|\}.
    \end{equation}
\end{theorem}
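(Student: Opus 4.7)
The natural strategy is to express $\rank A$ as the value of a combinatorial optimization problem on $C$ and then appeal to a min-max duality. On the column set $C$, define two matroids: the linear matroid $M_Q$ from the columns of $Q$, whose rank function satisfies $r_{M_Q}(X) = \rank Q[X]$, and the transversal matroid $M_T$ induced by the bipartite graph whose edges encode the nonzero positions of $T$, which gives $r_{M_T}(X) = \trank T[X]$. The plan is first to show $\rank A = \rank(M_Q \join M_T)$ and then invoke Edmonds' matroid-union rank theorem (equivalently, the min-max theorem for independent matching on which Cunningham's algorithm is based).

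For the identification $\rank A = \rank(M_Q \join M_T)$, I would prove both inequalities at the level of column subsets. The substantive direction is $\rank A \ge \rank(M_Q \join M_T)$: given a partition $I = I_Q \sqcup I_T$ with $I_Q$ independent in $M_Q$ and $I_T$ matchable in the bipartite graph of $T$, choose a matching sending each $j \in I_T$ to a distinct row of $T$, and form a square submatrix of $A[I]$ of size $|I|$ using $|I_Q|$ rows of $Q$ chosen so that $Q$'s selected $|I_Q| \times |I_Q|$ block is nonsingular, together with the $|I_T|$ matched rows of $T$. In the Leibniz expansion of this minor, the monomial $\prod_{j \in I_T}\alpha_{k(j)}$ appears with coefficient equal to a nonzero minor of $Q$, and since each $\alpha_k$ occurs in only one entry of $T$, this monomial cannot be cancelled by any other term. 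The reverse inequality follows by reading off, from a nonvanishing maximal minor of $A$, a partition of its column support into a $Q$-independent part (the columns contributing the constant factors) and a $T$-matchable part (the columns contributing the $\alpha$-factors).

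With the matroid-union identification in place, the identity~\eqref{lmrankdual} follows from Edmonds' rank theorem $\rank(M_Q \join M_T) = \min_{X \subseteq C}\{|C \setminus X| + r_{M_Q}(X) + r_{M_T}(X)\}$, after substituting the two rank functions and rewriting the minimizer in the form displayed in the theorem. The main obstacle is the noncancellation step in the $\ge$ direction: without the layered mixed structure — each $\alpha_k$ appearing in only one entry of $T$ — distinct transversals could produce monomials that cancel in the Leibniz expansion, destroying the correspondence between algebraic and combinatorial independence. Once that step is secured, the rest is a direct appeal to matroid-union duality.
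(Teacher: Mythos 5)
The paper cites this theorem from Murota's textbook without proving it, so there is no in-paper argument to compare against; your matroid-union route is, however, precisely the standard one. Identifying $\rank A$ with $\rank(M_Q \join M_T)$ and then invoking Edmonds' matroid union rank formula is the right machinery, and your Leibniz non-cancellation argument for $\rank A \ge \rank(M_Q \join M_T)$ --- the coefficient of $\prod_{j\in I_T}\alpha_{k(j)}$ equals a nonsingular minor of $Q$ because each $\alpha_k$ occupies a unique position of $T$ --- is correct.

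Two steps need more than you give them. For the reverse inequality you ``read off'' from a nonvanishing maximal minor a $Q$-independent and a $T$-matchable part, but extracting these from a single nonzero monomial does not work: if $\sigma$ is that monomial's permutation, the entries $Q_{r,\sigma(r)}$ being nonzero does not make $Q[R_Q,\sigma(R_Q)]$ nonsingular. What is needed is the generalized Laplace expansion of $\det A[R_Q\cup R_T, I]$ along the $T$-rows as $\sum_{J}\pm\det Q[R_Q,I\setminus J]\,\det T[R_T,J]$, plus the observation that distinct $J$ yield monomial-disjoint $\det T[R_T,J]$ (again by uniqueness of the $\alpha_k$'s), so some summand is nonzero; that $J$ supplies the partition. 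So non-cancellation is needed in this direction too, not only in the one you single out. Second, the final ``rewriting'' is left unexplained and as written does not land: Edmonds' formula reads $\min_{X\subseteq C}\{|C\setminus X|+\rank Q[X]+\trank T[X]\}$ over all $X$, whereas~\eqref{lmrankdual} has $-|C\setminus I|$ and the constraint $|I|=m_Q$. These are not equal in general; for $Q=(1\ 0)$ and $T=(0\ \alpha_1)$ the former gives $2=\rank A$ while~\eqref{lmrankdual} evaluates to $0$. Either the displayed identity has a sign/constraint transcription slip and should read $\min_{I\subseteq C}\{\rank Q[I]+\trank T[I]+|C\setminus I|\}$, or an explicit derivation reconciling the two is required; a bare appeal to ``rewriting the minimizer'' does not close the proof.
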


A column set $I$ attaining the minimum of~\eqref{lmrankdual} for a layered mixed matrix $A 
= \binom{Q}{T}$ can also be obtained as a byproduct of Cunninghum's algorithm.
Eliminating a constant submatrix $Q[I]$, we can construct a nonsingular constant matrix $U$ such that $\rank A = \trank UA$ holds without symbolic operations; see~\cite{iwata19}. 

Since $J_\mathrm{M}$ is obtained from $J$ by discarding algebraic relationships between non-zero entries in the bottom $n$ rows, we always have $\rank J \le \rank J_\mathrm{M}$.
In particular, if $J_\mathrm{M}$ is singular, so is $J$.
Furthermore, a certificate $U \in \mathbb{R}^{2n \times 2n}$ of the singularity of $J_\mathrm{M}$ is also that of $J$, i.e., $\trank UJ = \trank UJ_\mathrm{M} < n$ holds.
Therefore, we can transform the DAE~\eqref{layer} into an equivalent DAE $f' = 0$ with $\hat{\delta}_{f'} < \hat{\delta}_f$ using $U$ in the same manner explained in Section~\ref{subsec_lti}.

Unfortunately, the singularity of $J$ does not necessarily imply that of $J_\mathrm{M}$, as the following example demonstrates.

\begin{example}\label{ex:iot-failure}
    Let us consider a DAE
    \begin{align}\label{exeq}
        f \coloneqq \begin{pmatrix}
            0 \\
            x_2 \\
            \dot{x}_2 + \dot{x}_3
        \end{pmatrix}
        + \begin{pmatrix}
            \dot{x_1}\dot{x_2}+\dot{x_2}\dot{x_3}+\dot{x_3}\dot{x_1}+\dot{x_3}^2 \\
            \cos(\dot{x_1}-\dot{x_2}) \\
            \cos(\dot{x_1}-\dot{x_2})
        \end{pmatrix}
        = 0.
    \end{align}
    The layer form~\eqref{layer} of~\eqref{exeq} has an optimal solution $p = (0,0,0,0,0,0)$, $q = (1,1,1,0,0,0)$ and the associated system Jacobian is
    \begin{equation}\label{exeq_jac}
        \begin{pNiceMatrix}
            & &   & 1 &   &   \\
            & &   &   & 1 &   \\
            & 1 & 1 &   &   & 1 \\
            \dot{x_1}+\dot{x_3}& \dot{x_2}+\dot{x_3}&\dot{x_1}+\dot{x_2}+2\dot{x_3}&-1& &\\
            -\sin(\dot{x_1}-\dot{x_2})&\sin(\dot{x_1}-\dot{x_2})&&&-1& \\
            -\sin(\dot{x_1}-\dot{x_2})&\sin(\dot{x_1}-\dot{x_2})&&&&-1 
        \end{pNiceMatrix},
    \end{equation}
    which is singular.
    However, the IOT method approximates~\eqref{exeq_jac} by
    \begin{equation}\label{exeq_js}
        J_\mathrm{M}=\begin{pmatrix}
            & &   & 1 &   &   \\
            & &   &   & 1 &   \\
            & 1 & 1 &   &   & 1 \\
            \alpha_1&\alpha_2&\alpha_3 & -1& &\\
            \alpha_4&\alpha_5& & &-1& \\
            \alpha_6&\alpha_7& & & &-1
        \end{pmatrix},
    \end{equation}
    which is nonsingular.
\end{example}

The IOT method fails for~\eqref{exeq} because it replaces the $(5,1)$, $(5,2)$, $(6,1)$, $(6,2)$ entries of~\eqref{exeq_jac} with distinct symbols $\alpha_4, \dotsc, \alpha_7$, although they were indeed like terms $\pm \sin(\dot{y_1}-\dot{y_2})$.
This is a limitation of the IOT method that relies on layered mixed matrices, in which each symbol can occur only once.

\begin{remark}\label{rem:physical-param}
    The IOT method was initially proposed for regularizing linear time-invariant DAEs that contain physical parameters. In the spirit of mixed matrices~\cite{murotatext}, these parameters are modeled as independent symbols due to tolerance or measurement errors.
    Even in the application to nonlinear DAEs explained in this section, we can also treat terms of physical parameters as functional terms, even though they are indeed constants.
\end{remark}

\section{Proposed Method}\label{sec_prp}

In this section, to capture algebraic relationships in more detail, we propose a regularization method that approximates the system Jacobian by a more expressive but still computationally tractable symbolic matrix.
We first describe an overview in Section~\ref{subsec_prp_overview} and then explain each step of the method in Sections~\ref{subsec_prp_1CM-matrix}--\ref{subsec_prp_trank}.

\subsection{Overview}\label{subsec_prp_overview}

Let $f=0$ be a DAE~\eqref{dae}, where there is no need to assume the decomposition~\eqref{half}. 
Let $(p, q)$ an optimal solution to $(\mathrm{D}_f)$.
Our method first approximates the system Jacobian $J = J_f^{p,q}$ by a \emph{linear symbolic matrix} $J_\mathrm{LS} = A_0 + \alpha_1 A_1 + \dotsb + \alpha_m A_m$ defined as follows.

\begin{definition}
    A \emph{linear symbolic matrix} is a rational function matrix $A\in \mathbb{R}(\alpha_1,\dotsc,\alpha_m)^{n\times n}$ such that every entry is a polynomial in $\alpha_1,\dotsc,\alpha_m$ of degree at most one.  A linear symbolic matrix is expressed as
    $
        A=A_0+\alpha_1 A_1+\dotsb+\alpha_mA_m,
    $
    where $A_0,\dotsc,A_m \in \mathbb{R}^{n\times n}$.
\end{definition}

As with the approximation by a layered mixed matrix $J_\mathrm{M}$, $\rank J \le \rank J_\mathrm{LS}$ always holds, where the inequality can be strict.
Nevertheless, $J_\mathrm{LS}$ can be a ``better'' approximation than $J_\mathrm{M}$, as the following shows.

\begin{example}\label{ex:toy}
    Consider the system Jacobian~\eqref{exeq_jac} of the layer form of the DAE~\eqref{exeq} again.
    Without minding multiple occurrences of the same symbols, we approximate~\eqref{exeq_jac} by
    \begin{align}\label{lsm-ex}
        J_\mathrm{LS}=\begin{pNiceMatrix}
            & &   & 1 &   &   \\
            & &   &   & 1 &   \\
            & 1 & 1 &   &   & 1 \\
            \alpha_1+\alpha_3&\alpha_2+\alpha_3&\alpha_1+\alpha_2+2\alpha_3 & -1& &\\
            -\alpha_4&\alpha_4& & &-1&\\
            -\alpha_4&\alpha_4& & & &-1
        \end{pNiceMatrix},
    \end{align}
    where we mapped as $\dot{x_1}\mapsto \alpha_1, \dot{x_2}\mapsto \alpha_2, \dot{x_3}\mapsto \alpha_3,$ and $\sin(\dot{x_1}-\dot{x_2}) \mapsto \alpha_4$. The resulting matrix $J_\mathrm{LS}$ is a linear symbolic matrix.
    In contrast to $J_\mathrm{M}$ given in~\eqref{exeq_js}, the matrix $J_\mathrm{LS}$ is still singular, as the sum of the first and second columns are equal to the last column.
\end{example}

\begin{example}[{Continued from Example~\ref{robotex}}]\label{ex:robotex-lsm}
    For the system Jacobian~\eqref{eq:ex-jacobian} of the DAE~\eqref{roboteq1} describing path control of a robotic arm, we can construct the following linear symbolic matrix
    \begin{align}\label{eq:robot-ls}
        J_\mathrm{LS} = \begin{pmatrix}
            1& & &-\alpha_1 &\alpha_1\\
             &1& &\alpha_1 + \alpha_2 & -\alpha_1 - \alpha_2\\
             & &1&\alpha_1&-\alpha_1-\alpha_3\\
            -\alpha_4& & & & \\
            -\alpha_4- \alpha_5& - \alpha_5& & &
        \end{pmatrix},
    \end{align}
    where we replaced terms as $-1/Q \mapsto \alpha_1$, $\gamma c_1 / \beta Q \mapsto \alpha_2$, $1/J \mapsto \alpha_3$, $l_0 s_0 \mapsto \alpha_4$, and $l_1 s_{01} \mapsto \alpha_5$.
    The matrix~\eqref{eq:robot-ls} is still singular.
\end{example}

Nonsingularity testing for linear symbolic matrices is called \emph{Edmonds' problem}~\cite{edmonds67}.
Unfortunately, it is a long-standing open problem whether one can solve Edmonds' problem in deterministic polynomial time; hence performing the combinatorial relaxation using linear symbolic matrices is still a difficult task.

To alleviate this problem, we further exploit the ``low-rank structure'' of the coefficient matrices $A_1, \dotsc, A_m$ in $J_\mathrm{LS} = A_0 + \alpha_1 A_1 + \dotsb + \alpha_m A_m$ arising from real instances.
Specifically, the coefficient matrices $A_k$, except the constant term $A_0$, tend to have rank one, i.e., $\rank A_k = 1$ for all $k \geq 1$.
We call such a linear symbolic matrix a \emph{rank-1 coefficient mixed matrix} (1CM-matrix); the matrices~\eqref{lsm-ex} and~\eqref{eq:robot-ls} are actually 1CM-matrices.
In Appendix~\ref{app-1cm}, we show that DAEs represented via some ``composition'' of linear combination of variables and nonlinear functions have system Jacobians which can be approximated by 1CM-matrices.  This class includes DAEs of electrical circuits modeled by modified nodal analysis~\cite{ho75}, or multi-body mechanical systems.

Even if some of the ranks of the coefficient matrices are grater than one, we can construct a linear symbolic matrix $J_\mathrm{LS}$ from a 1CM-matrix $J_\mathrm{1CM}$ with $\rank J_\mathrm{1CM} \ge \rank J_\mathrm{LS}$ by using rank decomposition, although this decomposition may lose the singularity of $J_\mathrm{LS}$.

Ivanyos et al.~\cite{ivanyos10} showed that for any singular 1CM-matrix $A = A_0 + \alpha_1 A_1 + \dotsb + \alpha_m A_m$, there exist nonsingular matrices $U, V \in \mathbb{R}^{n \times n}$ such that $\trank UAV < n$ holds.
We call such $(U, V)$ a \emph{vanishing pair} of $A$; it serves as a certificate of singularity of $A$.
Ivanyos et al.~\cite{ivanyos10} also proposed an algorithm to find a vanishing pair of a 1CM-matrix that runs in $O(n^5m)$ operations over $\mathbb{R}$.\footnote{For the special case with $A_0=O$, a vanishing pair can be found by solving the dual problem to linear matroid intersection~\cite{lovasz89}.}
We propose a faster algorithm to find a vanishing pair of a 1CM-matrix that runs in $O((n+m)^3\log(n+m))$ time, which is faster than the algorithm by Ivanyos et al. because we can assume $m \le n^2$ by picking up a subset of $A_1,\dotsc,A_m$ which is linearly independent in the linear space of $n\times n $ matrices.  This can be done by elininating a $m\times n^2$ matrix and can be done in $O(\min\{mn^4,m^2n^2\})$ time, which is still faster than $O(n^5m)$.

A vanishing pair of $J_\mathrm{1CM}$ is that of $J$ as well.
Using it, we can convert $f = 0$ into an equivalent DAE $f' = 0$ with $\hat{\delta}_{f'}< \hat{\delta}_{f}$.

In summary, our method performs Phase~3 of the combinatorial relaxation in the following three steps.

\begin{enumerate}[{label={\textbf{Step~\arabic*.}},labelindent=\parindent,leftmargin=*}]
    \item Approximate the system Jacobian $J_f^{p,q}$ of the DAE $f = 0$ by a 1CM-matrix $J_\mathrm{1CM}$.
    \item Halt if $J_\mathrm{1CM}$ is nonsingular. Otherwise, find a vanishing pair $(U, V)$ of $J_\mathrm{1CM}$.
    \item Transform the DAE~$f=0$ into an equivalent DAE~$f'=0$ with $\hat\delta_{f'}<\hat\delta_f$ using $(U, V)$.
\end{enumerate}

We explain details of Steps 1--3 in Sections~\ref{subsec_prp_1CM-matrix}--\ref{subsec_prp_trank}, respectively.

\begin{remark}
  While $J_\mathrm{1CM}$ can capture more algebraic dependencies of entries in a system Jacobian $J = J_f^{p,q}$ than $J_\mathrm{M}$, it can still happen that $n = \rank J_\mathrm{1CM} > \rank J$.
  In this case, our method outputs a DAE with a singular system Jacobian like the IOT method does, keeping the structural methods inapplicable.
  Thus, for such DAEs, we need to switch to a slower but a more applicable regularization method, as mentioned in Section~\ref{sec:introduction}.
  
\end{remark}

\subsection{Step 1: Constructing 1CM-matrix}\label{subsec_prp_1CM-matrix}

This section presents a heuristic approach to approximate a system Jacobian $J$ by a 1CM-matrix $J_\mathrm{1CM}$.
We first construct an intermediate linear symbolic matrix $J_\mathrm{LS}$, which is not necessarily a 1CM-matrix.
Then, we convert it into a 1CM-matrix $J_\mathrm{1CM}$.

\subsubsection*{From System Jacobian to Linear Symbolic Matrix}
We expand $J$ as $J=A_0+h_1A_1+\cdots+h_mA_m$, where each $h_i$ is a function and $A_i$ is a constant matrix, where we allow $h_i = h_j$ for $i \ne j$.
Then, replacing each $h_i$ with a symbol $\alpha_i$, we obtain a linear symbolic matrix $J_\mathrm{LS}=A_0+\alpha_1 A_1+\cdots+\alpha_mA_m$.
It is easy to verify that a vanishing pair of $J_\mathrm{LS}$ is that of $J$, while the converse does not necessarily hold.
Thus, it is desirable to obtain an expansion such that the existence of a vanishing pair of $J$ implies the existence of that of $J_\mathrm{LS}$.
If we could obtain linearly independent $h_1, \dotsc, h_m$, i.e., $c_1 h_1 + \dotsb + c_m h_m \not\equiv 0$ for all $(c_1, \dotsc, c_m) \ne 0$, then any vanishing pair of $J$ is that of $J_\mathrm{LS}$.
However, unfortunately, even determining whether a set of functions is linearly independent is an undecidable problem in general, because even determining whether some expression is nonzero is undecidable in general~\cite{richardson69}, and thus computing such a decomposition is impossible.  We here present one heuristic algorithm to obtain a decomposition, avoiding linear dependencies among $h_1, \dotsc, h_m$ as much as possible.

First, we decompose each $(i,j)$ entry $J_{ij}$ of $J$ as $J_{ij}=\sum_l c_{ijl}g_{ijl}$, where $c_{ijl}$ is a constant and $g_{ijl}$ is a function.  There are many ways for this decomposition, but in general, one should decompose $J_{ij}$ as many terms as possible, except that evidently like terms are merged. For example, if $J_{ij}$ is a polynomial in $x_1,\dotsc,x_1^{(k)},\dotsc,x_n,\dotsc,x_n^{(k)}$, then one should expand the polynomial and set each monic monomial term as $g_{ijl}$ and its coefficient as $c_{ijl}$.  This is because the set of distinct monic monomials is linearly independent.  For a non-polynomial entry, one simple and effective method is to decompose $J_{ij}$ ``arithmetically'': we expand $(f+g)h$ as $fh+gh$ for functions $f,g,h$ but do not transform non-polynomial functions such as $\sin(x+y)$.
For instance, we can perform such an expansion by using \texttt{expand} function in Symbolic Math Toolbox of MATLAB with setting \texttt{arithmeticOnly} option as \texttt{true}\footnote{\url{https://www.mathworks.com/help/symbolic/choose-function-to-rearrange-expression.html} (accessed February 1, 2024).}.

Next, we find identical functions among $\{g_{ijl}\}_{i,j,l}$.  This procedure, however, is again undecidable in general~\cite{richardson69}.  Thus, we use the following heuristics.
\begin{description}[{leftmargin=\parindent,labelindent=\parindent}]
    \item[\textbf{Hash}:] generate a hash value $H_f$ from a function $f$. Different functions must have different hash values, i.e., $H_f\neq H_g$ for $f\not\equiv g$. The same functions are expected to have the same hash values, i.e., $H_f=H_g$ for $f\equiv g$.
\end{description}
Note that usual hash functions used in data structures are required to hold $H_f=H_g$ if $f \equiv g$ and is expected to hold $H_f\neq H_g$ if $f\not\equiv g$; the opposite nature is required for our purpose.  Such heuristics can be constructed, for example, by applying a procedure to simplify a function representation (for instance, \texttt{simplify} function in MATLAB), and then hashing the string representation of the simplified representation of the function.

Finally, we identify $g_{ijl}$ and $g_{i'j'l'}$ if their hash values are the same.  After the identification, we rename the functional terms as $h_1,\dotsc,h_m$ and obtain an expansion $J=A_0+h_1A_1+\cdots+h_mA_m$.

\subsubsection*{From Linear Symbolic Matrix to 1CM-matrix}

Let $J_\mathrm{LS}=A_0+\alpha_1 A_1+\cdots+\alpha_m A_m$ be a linear symbolic matrix and $r(i)=\rank A_i$ for $i \in [m]$.  It is well-known that each $A_i$ can be decomposed into the sum of $r(i)$ matrices whose ranks are one as $A_i=A_{i1}+\cdots+A_{ir(i)}$.
Using this decomposition, we construct a 1CM-matrix $J_\mathrm{1CM}$ as
\begin{equation}\label{rb}
    \begin{aligned}
        J_\mathrm{1CM}=A_0+&\alpha_{11}A_{11}+\cdots+\alpha_{1r(1)}A_{1r(1)}+ \dotsb \\
        &\quad+\alpha_{m1}A_{m1}+\cdots+\alpha_{mr(m)}A_{mr(m)},
    \end{aligned}
\end{equation}
where $\alpha_{ij}$ are new symbols.
It is clear that any vanishing pair $(U,V)$ of $J_\mathrm{1CM}$ is that of $J_\mathrm{LS}$, hence that of $J$.

\subsection{Step 2: Finding Vanishing Pair}\label{ncdual}

This section proposes a fast algorithm to find a vanishing pair $(U, V)$ of a 1CM-matrix without using symbolic computation.
A key property is the min-max formula for the ranks of 1CM-matrices given by Soma~\cite{soma14}, which extends Theorem~\ref{rankofmixed}.
Let $A = A_0 + \alpha_1 A_1 + \dotsc + \alpha_m A_m$ be an $n \times n$ 1CM-matrix.
Using $n$-dimensional non-zero vectors $b_1,\dotsc,b_m,c_1,\dotsc,c_m \in \mathbb{R}^n$, we can write $A$ as
\begin{equation}\label{vec}
    A=A_0+\alpha_1 b_1c_1^\top+\cdots+\alpha_m b_mc_m^\top.
\end{equation}
We call~\eqref{vec} a \emph{vector representation} of $A$.
    
\begin{theorem}[{\cite[Theorem 3.4]{soma14}}]\label{thm:1cm-minmax}
    Consider a vector representation~\eqref{vec} of a 1CM-matrix $A$ and let $B = (b_1 \dotsb b_m)$ and $C = (c_1 \dotsb c_m)$.
    Then, it holds that
    \begin{equation}\label{1cmrank}
    \rank A=\min  \Set*{\rank \begin{pmatrix}A_0 & B[I]\\C[[m]\setminus I]^\top &O\end{pmatrix}}{I \subseteq [m]}.
    \end{equation}
\end{theorem}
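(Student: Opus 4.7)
The plan is to establish the two inequalities $\rank A \leq \rank M_I$ for every $I \subseteq [m]$, and $\min_I \rank M_I \leq \rank A$, separately, where $M_I \coloneqq \begin{pmatrix} A_0 & B[I] \\ C[[m]\setminus I]^\top & O \end{pmatrix}$. Throughout, write $\bar I \coloneqq [m] \setminus I$ and $\Lambda \coloneqq \diag\{\alpha_i\}$, so that $A = A_0 + B \Lambda C^\top$. The upper-bound direction is elementary, while the lower-bound direction is the technical core.

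For the upper bound, I would introduce the auxiliary $(n + |\bar I|) \times (n + |I|)$ block matrix
\[
N_I \coloneqq \begin{pmatrix} A & B[I] \\ C[\bar I]^\top & O \end{pmatrix}.
\]
Since $A$ sits as a submatrix of $N_I$, we automatically obtain $\rank N_I \geq \rank A$. I would then show $\rank N_I = \rank M_I$ by elementary row and column operations: for each $i \in I$, adding $-\alpha_i (c_i)_j$ times the column of $N_I$ corresponding to $b_i$ to column $j \in [n]$ subtracts the rank-one contribution $\alpha_i b_i c_i^\top$ from the top-left $A$ block while leaving the bottom rows intact (they vanish in the $B[I]$ columns). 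Summed over $i \in I$, these ops remove $B[I] \Lambda_I C[I]^\top$ from the top-left. Symmetrically, for each $i \in \bar I$, row operations using the $i$-th row of the $C[\bar I]^\top$ block subtract $B[\bar I] \Lambda_{\bar I} C[\bar I]^\top$ from the top-left, leaving exactly $A_0$ there and preserving every other block. Hence $\rank A \leq \rank N_I = \rank M_I$.

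For the lower bound, I would invoke matroid-intersection machinery. The special case $A_0 = O$ reads $\rank (B \Lambda C^\top) = \min_I (\rank B[I] + \rank C[\bar I])$, since each $M_I$ is then block-anti-diagonal; this is precisely Lov\'asz's classical min-max identity for the rank of $B \Lambda C^\top$ via matroid intersection of the column matroids of $B$ and $C$. For general $A_0$, I would work with the bordered matrix $\hat A \coloneqq \begin{pmatrix} A_0 & -B \\ C^\top & \Lambda^{-1} \end{pmatrix}$, which satisfies $\rank \hat A = m + \rank A$ by Schur complement with respect to $\Lambda^{-1}$. Scaling the bottom $m$ rows by $\Lambda$ converts $\hat A$ into a 1CM-matrix of simpler form whose perturbations are rigid unit-vector rank-ones $e_{n+i}(c_i^\top, 0)$ over the constant block $\binom{A_0\ -B}{O\ I_m}$, and an augmenting-path argument in the resulting matroid-intersection instance produces a partition $[m] = I^* \sqcup \bar{I^*}$ satisfying $\rank M_{I^*} \leq \rank A$.

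The main obstacle is the lower bound, in particular the handling of the constant term $A_0$. Unlike the rank-one symbolic perturbations $\alpha_i b_i c_i^\top$, the constant block $A_0$ cannot be cleanly partitioned between the two sides of the min-max formula; the seemingly natural approach of decomposing $A_0 = \sum_k u_k v_k^\top$ into rank-one pieces with fresh symbols $\beta_k$ and then specializing $\beta_k = 1$ only yields an inequality in the wrong direction, since specialization can strictly drop the rank of a symbolic matrix. The correct handling is to keep $A_0$ as an indivisible block inside each $M_I$ and to construct the minimizer $I^*$ via a combinatorial exchange argument that treats $A_0$ as a fixed contribution within the matroid-intersection framework, which is the technical heart of the proof.
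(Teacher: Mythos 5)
The upper-bound half of your argument is correct and cleanly executed: the row/column operations on $N_I = \begin{pmatrix} A & B[I] \\ C[[m]\setminus I]^\top & O \end{pmatrix}$ really do peel off $\alpha_i b_i c_i^\top$ for $i \in I$ via column operations and for $i \notin I$ via row operations, leaving $M_I$, and the containment $\rank A \le \rank N_I = \rank M_I$ follows. This matches the easy direction of the result.

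The lower bound, however, has a genuine gap, and you essentially acknowledge it yourself. Your bordered matrix $\hat A = \begin{pmatrix} A_0 & -B \\ C^\top & \Lambda^{-1} \end{pmatrix}$ (equivalently, after rescaling, the \emph{sparse representation} $\tilde A$ from Soma's paper, which is what this reduces to) is \emph{not} a layered mixed matrix: in $\hat A$ the bottom block contains both the symbolic diagonal $\Lambda^{-1}$ and the constant block $C^\top$, which violates the requirement that every nonzero entry of the symbolic block be a distinct symbol; in the rescaled form $\begin{pmatrix} A_0 & -B \\ \Lambda C^\top & I_m \end{pmatrix}$ each $\alpha_i$ multiplies an entire row $c_i^\top$, so it appears multiple times. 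Consequently the min-max machinery for layered mixed matrices (Theorem~\ref{rankofmixed}) does not directly apply, and the phrase ``an augmenting-path argument in the resulting matroid-intersection instance produces a partition $I^*$'' is a placeholder for precisely the step that is hard. The missing idea, used by Soma and referenced in this paper, is to pass one step further to the \emph{layered} sparse representation $\tilde{\tilde A}$ of size $(n+2m)\times(n+2m)$, in which each $\alpha_i$ is duplicated into two symbols $\alpha_i, \alpha_{m+i}$ appearing in \emph{separate} rows, so that $\tilde{\tilde A}$ genuinely is a layered mixed matrix; one then applies~\eqref{lmrankdual} to $\tilde{\tilde A}$ and extracts the minimizer $I$ as $I = \{i \in [m] : i, i+n \in I'\}$ from the minimizer $I'$ of that formula, using $\rank A = \rank \tilde{\tilde A} - 2m$. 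Without this second-level expansion, your reduction stops at a 1CM-matrix of the same essential difficulty as $A$ and the lower bound is not established.
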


Cunningham's algorithm~\cite{cunningham83} can also be used to find a minimizer $I$ of~\eqref{1cmrank}, whereas this algorithm does not yield a vanishing pair of $A$.
Indeed, the proof of Theorem~\ref{thm:1cm-minmax} in~\cite{soma14} is based on the rank formula of layered mixed matrices obtained by expanding a 1CM-matrix, and the auxiliary graph constructed in Cunningham's algorithm to calculate $\rank A$ and optimal $I$ does not provide information on a vanishing pair of the original 1CM-matrix.

\begin{algorithm}[tb]
    \caption{Algorithm to find a vanishing pair of a 1CM-matrix}\label{alg_main}
    \begin{algorithmic}[1] 
        \Require 1CM-matrix $A = A_0 + \alpha_1 b_1 c_1^\top + \dotsb + \alpha_m b_m c_m^\top$
        \Ensure Vanishing pair $(U, V)$ of $A$
        \State $B \gets (b_1 \dotsb b_m)$, $C \gets (c_1 \dotsb c_m)$
        \State Find a minimizer $I$ of~\eqref{1cmrank} by Cunningham's algorithm
        \If{$\rank A = n$}
            \State Report ``$A$ is nonsingular'' and halt
        \EndIf
        \State Find nonsingular $U\in \mathbb{R}^{n\times n}$ such that $n-\rank B[I]$ rows of $UB[I]$ are zero. Let $S$ be the zero rows.
        \State Find nonsingular $V\in \mathbb{R}^{n\times n}$ such that $n-\rank C[[m]\setminus I]^\top$ columns of $C[[m]\setminus I]^\top V$ are zero. Let $T$ be the zero columns.
        \State Find nonsingular $W\in \mathbb{R}^{n\times n}$ such that $|S|-\rank (UA_0V)[S,T]$ rows of $(WUA_0V)[S,T]$ are zero.
        \State $U\leftarrow WU$
        \State \Return $(U, V)$
    \end{algorithmic}
\end{algorithm}

Algorithm~\ref{alg_main} describes our algorithm to find a vanishing pair.
Our idea for Algorithm~\ref{alg_main} stems from the following observation: by Kőnig's theorem~\cite{konig31}, $\trank UAV < n$ is equivalent to the statement that $UAV$ has an all-zero submatrix whose row and column sizes sum to at least $n+1$. 
If $A_0 = O$, eliminating $B[I]$ and $C[[m] \setminus I]^\top$ yields such a submatrix in $A$, see, e.g.,~\cite[Theorem~2.2]{furue20}.
In our setting, however, the constant term $A_0$ hinders the submatrix from being all-zero.
We thus further eliminate $A_0$ carefully not to break the zero submatrix produced by the eliminations of $B[I]$ and $C[[m] \setminus I]^\top$.
The validity and the running time of Algorithm~\ref{alg_main} are summarized as follows.

\begin{theorem}\label{main_1cmdual}
    For a singular $n \times n$ 1CM-matrix $A = A_0 + \alpha_1 A_1 + \dotsb + \alpha_m A_m$, Algorithm \ref{alg_main} outputs a vanishing pair $(U, V)$ of $A$ in $O({(n+m)}^3\log(n+m))$ arithmetic operations on $\mathbb{R}$.
\end{theorem}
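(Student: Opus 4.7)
The goal is to show that the $(U,V)$ returned by Algorithm~\ref{alg_main} is a vanishing pair, and that the running time is as claimed. My plan for correctness hinges on K\H{o}nig's theorem: $\trank UAV < n$ holds whenever $UAV$ admits an all-zero submatrix of size $|S'| \times |T|$ with $|S'| + |T| \ge n+1$. Thus the task reduces to exhibiting such $S', T$ explicitly from the algorithm's output.

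First I would expand
\[
    UAV = UA_0 V + \sum_{j=1}^m \alpha_j (Ub_j)(c_j^\top V).
\]
After Step~6, $UB[I]$ has zero rows indexed by $S$, so each term with $j \in I$ contributes zero on rows $S$; symmetrically after Step~7, each term with $j \in [m]\setminus I$ contributes zero on columns $T$. Hence $(UAV)[S,T]$ agrees with $(UA_0V)[S,T]$. For Step~8 I would take $W$ to act as the identity outside the rows in $S$ and merely perform row operations within $S$; this preserves the zero-row pattern of $UB[I]$ at $S$, yet produces a subset $S' \subseteq S$ of $|S| - \rho$ rows (with $\rho \coloneqq \rank (UA_0V)[S,T]$) on which $(WUA_0V)[S,T]$ vanishes. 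After the update in Step~9, one concludes $(UAV)[S',T] = O$.

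The core of the argument is then to lower-bound $|S'| + |T|$. Setting $r_1 \coloneqq \rank B[I]$ and $r_2 \coloneqq \rank C[[m]\setminus I]^\top$, we have $|S'| + |T| = 2n - r_1 - r_2 - \rho$, so the target becomes $r_1 + r_2 + \rho \le n-1$. Since $I$ is a minimizer of~\eqref{1cmrank} and $A$ is singular, Theorem~\ref{thm:1cm-minmax} gives
\[
    \rank \begin{pmatrix} A_0 & B[I] \\ C[[m]\setminus I]^\top & O \end{pmatrix} = \rank A \le n-1.
\]
To connect this rank with $r_1 + r_2 + \rho$, I would conjugate the block matrix by $\diag(U, I_{m-|I|})$ on the left and $\diag(V, I_{|I|})$ on the right (which leaves the rank invariant), and then restrict to a suitable submatrix. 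Partitioning the rows as $S \sqcup ([n]\setminus S) \sqcup ([m]\setminus I)$ and the columns as $T \sqcup ([n]\setminus T) \sqcup I$, and choosing $\rho$ rows of $S$ with $\rho$ columns of $T$ (realizing $\rho$), $r_1$ rows of $[n]\setminus S$ with $r_1$ columns of $I$ (realizing $r_1$), and $r_2$ rows of $[m]\setminus I$ with $r_2$ columns of $[n]\setminus T$ (realizing $r_2$), I expect the chosen submatrix to be block lower triangular after an appropriate permutation, thanks to the zero blocks produced in Steps~6--7. Its rank is therefore $r_1 + r_2 + \rho$, and the inequality follows.

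For the running time, Step~2 dominates: by running Cunningham's algorithm on the layered mixed matrix used in Soma's proof of Theorem~\ref{thm:1cm-minmax}, which has size $O(n+m)$, the cost is $O((n+m)^3 \log(n+m))$. The remaining steps are Gaussian eliminations on matrices of size at most $O(n+m)$, each costing $O((n+m)^3)$. The main obstacle I anticipate is the block-triangular construction above: choosing the row/column permutation so that the three diagonal blocks simultaneously have the correct ranks $r_1, r_2, \rho$ while the off-diagonal obstructions vanish is the one place requiring care, and it is precisely where the specific choices of $U$ and $V$ are used.
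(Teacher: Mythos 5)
Your proposal is correct and takes essentially the same route as the paper: conjugate Soma's dual block matrix by $(U,V)$, locate a zero block in $WUAV$ of total size $2n - (r_1 + r_2 + \rho)$, bound $r_1 + r_2 + \rho \le \rank A \le n-1$, and invoke K\H{o}nig's theorem. The only cosmetic differences are that you derive the needed inequality by exhibiting a full-rank block lower-triangular submatrix rather than computing $\rank M = r_1 + r_2 + \rho$ exactly as in the paper's Figure~\ref{sp1}, and you usefully make explicit that $W$ must act as the identity outside the rows $S$, a constraint that the algorithm's Step~8 leaves implicit but is needed to keep $(WUB[I])[S] = O$.
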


Particularly, we can obtain a vanishing pair $(U, V)$ such that $U$ and $V$ are upper-triangular along $p \in \mathbb{Z}^n$ and $q \in \mathbb{Z}^n$, respectively, by eliminating rows (resp.\ columns) having smaller $p$ (resp.\ $q$) values at Lines~5--7 in Algorithm~\ref{alg_main}.
Such $(U, V)$, we call \emph{upper-triangular along} $(p, q)$, will be used in Step~3.

Before presenting the proof of Theorem~\ref{main_1cmdual}, we introduce additional representations for 1CM-matrices, which play an important role in our proof.
Let $A = A_0 + \alpha_1 A_1 + \dotsc + \alpha_m A_m$ be an $n \times n$ 1CM-matrix and consider its vector representation~\eqref{vec}. 
We define the \emph{sparse representation}~\cite{soma14} of $A$ as the following $(n + m) \times (n + m)$ matrix
    \begin{equation}\label{sp}
        \tilde A=\begin{pNiceArray}[margin,vlines=2,hlines=2]{c|ccc}A_0&b_1&\Cdots&b_m\\c_1^\top&\alpha_1\\\Vdots&&\Ddots\\c_m^\top&&&\alpha_m\end{pNiceArray},
    \end{equation}
    and the \emph{layered sparse representation} as the following $(n + 2m) \times (n + 2m)$ matrix
    \begin{equation}\label{lsp}
         \tilde {\tilde A}=\begin{pNiceArray}[margin]{ccc|ccc|ccc}
             1& & &\Block{3-3}{O} & & & &c_1^\top& \\
             &\Ddots& & & & & &\Vdots& \\
             & &1& & & & &c_m^\top& \\\hline
             \Block{3-3}{O}&  & & & & &\Block{3-3}{A_0} & & \\
             &&&b_1 &\Cdots&b_m& & & \\
             & & & & & & & & \\\hline
             \alpha_1& & &\alpha_{m+1}& & &\Block{3-3}{O} & & \\
             &\Ddots& & &\Ddots& & && \\
             & & \alpha_m& & & \alpha_{2m}& & &
         \end{pNiceArray}.
    \end{equation}

Note that $\tilde {\tilde A}$ is a layered mixed matrix.  It is known that there is the following relationship between the ranks of a 1CM-matrix and its (layered) sparse representation.
\begin{proposition}[{\cite[Lemma 3.1]{soma14}}]
For a 1CM-matrix $A$,
\begin{equation*}
    \rank A=\rank {\tilde A}-m=\rank\tilde {\tilde A}-2m
\end{equation*}
holds.
\end{proposition}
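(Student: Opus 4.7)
The plan is to establish both equalities by block Schur complement reductions, exploiting that the symbolic diagonal blocks of $\tilde A$ and $\tilde{\tilde A}$ are invertible over the rational function field $\mathbb{R}(\alpha_1,\dotsc,\alpha_m)$ (or its extension for $\tilde{\tilde A}$). After each reduction, the resulting Schur complement is not literally $A$ but a close cousin in which $\alpha_i$ is replaced by some algebraically independent expression; the gap will be closed by invoking invariance of rank under a field isomorphism (entrywise relabeling of algebraically independent transcendentals).

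For the first equality $\rank A = \rank\tilde A - m$, I would pre-multiply $\tilde A = \bigl(\begin{smallmatrix} A_0 & B \\ C^\top & D \end{smallmatrix}\bigr)$ with $D = \diag(\alpha_1,\dotsc,\alpha_m)$ by the unimodular matrix $\bigl(\begin{smallmatrix} I & -BD^{-1} \\ 0 & I \end{smallmatrix}\bigr)$ to obtain the block-triangular form $\bigl(\begin{smallmatrix} A_0 - BD^{-1}C^\top & 0 \\ C^\top & D \end{smallmatrix}\bigr)$, so $\rank\tilde A = m + \rank\bigl(A_0 - \sum_i \alpha_i^{-1} b_i c_i^\top\bigr)$. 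I would then apply the involutive $\mathbb{R}$-automorphism $\sigma\colon \alpha_i \mapsto -\alpha_i^{-1}$ of $\mathbb{R}(\alpha_1,\dotsc,\alpha_m)$ entrywise; since $\sigma(\alpha_i^{-1}) = -\alpha_i$, this maps the Schur complement precisely to $A = A_0 + \sum_i \alpha_i b_i c_i^\top$. Because field automorphisms commute with minor determinants and hence preserve rank, this yields $\rank A = \rank\tilde A - m$.

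For the second equality, I aim to show $\rank\tilde{\tilde A} = m + \rank\tilde A$ (combining with the first then gives $\rank\tilde{\tilde A} - 2m = \rank A$). Using the $I_m$ block in the top-left of $\tilde{\tilde A}$, I would clear the $D_1 = \diag(\alpha_1,\dotsc,\alpha_m)$ block below by subtracting $D_1$ times the first block row from the third block row, and clear the $C^\top$ block to its right by the analogous column operation. This decouples $I_m$ and reduces the computation to $\rank\tilde{\tilde A} = m + \rank M$, where $M = \bigl(\begin{smallmatrix} B & A_0 \\ D_2 & -D_1 C^\top \end{smallmatrix}\bigr)$ with $D_2 = \diag(\alpha_{m+1},\dotsc,\alpha_{2m})$. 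Swapping the two block columns of $M$ and left-multiplying the second block row by $-D_1^{-1}$ (admissible over the function field) turns $M$ into the sparse-representation form $\bigl(\begin{smallmatrix} A_0 & B \\ C^\top & \diag(\beta_i) \end{smallmatrix}\bigr)$ with $\beta_i := -\alpha_{m+i}/\alpha_i$. The $\beta_i$ are algebraically independent over $\mathbb{R}$ (the Jacobian $(\partial\beta_i/\partial\alpha_{m+j})$ is an invertible diagonal matrix), so $\alpha_i\mapsto\beta_i$ defines a field embedding $\mathbb{R}(\alpha_1,\dotsc,\alpha_m)\hookrightarrow\mathbb{R}(\alpha_1,\dotsc,\alpha_{2m})$; this embedding maps $\tilde A$ to the transformed $M$, and since rank is preserved under both field isomorphisms and field extensions, $\rank M = \rank\tilde A$.

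The delicate step is the rank-invariance under the nonlinear substitutions $\alpha_i\mapsto -\alpha_i^{-1}$ and $\alpha_i\mapsto -\alpha_{m+i}/\alpha_i$. Once one reformulates these as field isomorphisms and uses that rank is stable under field extension (equivalently, determined by the vanishing of minor determinants, which is a polynomial condition respected by isomorphisms), the remainder is careful bookkeeping of block row/column operations over the function field.
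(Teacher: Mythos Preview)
The paper does not supply its own proof of this proposition; it is quoted verbatim from \cite[Lemma~3.1]{soma14} and used as a black box. So there is no in-paper argument to compare against.

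Your proof is correct. Both equalities are obtained by block-triangularising via Schur complements over the rational function field, and the only nontrivial point---that the Schur complement $A_0-\sum_i\alpha_i^{-1}b_ic_i^\top$ (resp.\ the matrix with diagonal block $\diag(-\alpha_{m+i}/\alpha_i)$) has the same rank as $A$ (resp.\ $\tilde A$)---is handled cleanly by the observation that $\alpha_i\mapsto-\alpha_i^{-1}$ is an involutive $\mathbb{R}$-automorphism of $\mathbb{R}(\alpha_1,\dotsc,\alpha_m)$ and that $\beta_i=-\alpha_{m+i}/\alpha_i$ are again algebraically independent over $\mathbb{R}$, so $\alpha_i\mapsto\beta_i$ is a field embedding. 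Since rank is determined by the (non)vanishing of minors, it is preserved under any injective ring homomorphism of the coefficient field, and your conclusion follows. The bookkeeping of the block operations on $\tilde{\tilde A}$ checks out: after clearing with the $I_m$ pivot one is left with $\bigl(\begin{smallmatrix}B&A_0\\D_2&-D_1C^\top\end{smallmatrix}\bigr)$, and the column swap together with left-multiplication of the bottom block row by $-D_1^{-1}$ yields exactly the sparse form with the $\beta_i$.

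One stylistic remark: you could shorten the argument by noting that for a linear symbolic matrix the rank over $\mathbb{R}(\alpha_1,\dotsc,\alpha_m)$ equals the maximum rank over all real specialisations of the $\alpha_i$ (both equal the largest size of a minor whose determinant is not the zero polynomial). This immediately gives that substituting any algebraically independent family for the $\alpha_i$ preserves rank, without having to name the field maps explicitly. But what you wrote is already rigorous.
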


It is shown in~\cite{soma14} that $I$ attaining the minimum of the right-hand side of~\eqref{1cmrank} can be obtained as follows: first, find $I' \subseteq [3m]$ attaining the minimum in the formula~\eqref{lmrankdual} for $\tilde {\tilde A}$.  Then,
\begin{equation}
    I=\Set*{i \in [m]}{i,i+n \in I'}.
\end{equation}
minimizes the right-hand side of~\eqref{1cmrank}.

Now we give the proof of Theorem~\ref{main_1cmdual}.

\begin{proof}[{Proof of Theorem~\ref{main_1cmdual}}]
From the equation~\eqref{1cmrank},
\begin{equation*}
    \rank J_\mathrm{1CM}=\rank \begin{pmatrix}
     A_0 & B[I]\\C[[m]\setminus I]^\top&O
\end{pmatrix}
\end{equation*}
holds.  Now we define a matrix $M$ as

\begin{equation*}
    M=\begin{pmatrix}
        WU&O\\O&I_m
    \end{pmatrix}\begin{pmatrix}
     A_0 & B[I]\\C[[m]\setminus I]^\top&O
\end{pmatrix}\begin{pmatrix}
        V&O\\O&I_m
    \end{pmatrix}.
\end{equation*}
Then, $M$ has the form shown in Figure~\ref{sp1}.
\begin{figure}
    \centering
    \includegraphics[width=9cm]{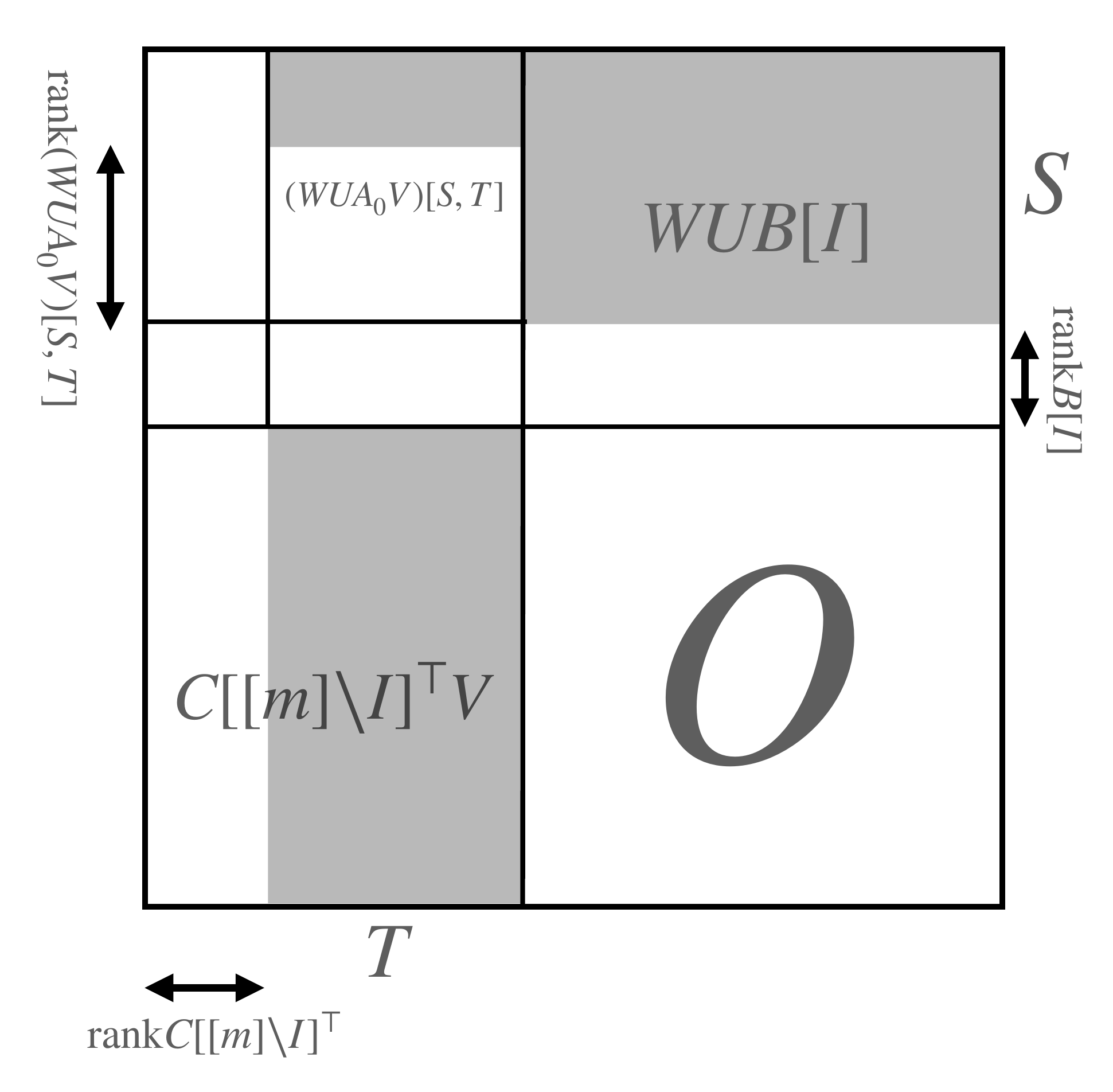}
    \caption{Structure of the matrix $M$.  Grayed out blocks represent zero matrices.}\label{sp1}
\end{figure}
We can easily obtain
\begin{align*}
\rank J_\mathrm{1CM}&=\rank M\\
&= \rank B[I]+\rank C[[m]\setminus I]^\top+\rank (WUA_0V)[S,T].
\end{align*}
On the other hand, noting that $WUJ_\mathrm{1CM}V$ can be represented as
\begin{equation}
WUJ_\mathrm{1CM}V=WUA_0V+\alpha_1 (WUb_1)(c_1^\top V)+\cdots+\alpha_m (WUb_m)(c_m^\top V).
\end{equation}
If we write $WUJ_\mathrm{1CM}V$ as $WUJ_\mathrm{1CM}V=A'_0+\alpha_1 A'_1+\cdots+\alpha_m A'_m$, all of the following submatrices are zero:

\begin{itemize}
\item the $n-\rank B[I]$ rows corresponding to $S$ in $A'_i ~(i \in I)$,
\item the $n -\rank C[[m]\setminus I]^\top$ columns corresponding to $T$ in $A'_i ~(i \not\in I)$, and
\item the submatrix of $A'_0[S,T]$ in $n-\rank B[I]-\rank (WUA_0V)[S,T]$ rows and $n-\rank C[[m]\setminus I]^\top$ columns. 
\end{itemize}

Therefore, in the whole, there is a zero block whose size (the sum of the row and column sizes) is
\begin{align*}
    2n-\rank B[I]-\rank C[[m]\setminus I]^\top-\rank (WUA_0V)[S,T]= 2n-\rank J_\mathrm{1CM},
\end{align*}
and this value is more than $n$ by the assumption.  That is, this zero block is a Hall blocker, and thus $\trank WUJ_\mathrm{1CM}V<n$ holds: $(WU,V)$ is a vanishing pair of $J_\mathrm{1CM}$.

Finally, we perform complexity analysis. First, the problem of finding $I$ can be performed in $O((n+m)^3\log(n+m))$ time using Cunningham's algorithm.  Finding $U,V$ can be realized by Gaussian elimination in $O(mn^2)$ time and it takes $O(n^3)$ time to find $W$.  In summary, it follows that the overall computational complexity is $O((n+m)^3\log(n+m))$. \qedhere
\end{proof}

\subsection{Step 3: Modifying DAE Using Vanishing Pair}\label{subsec_prp_trank}

This section explains Step~3 in our regularization method.
Let $(U, V)$ be a vanishing pair of a system Jacobian $J$ that is upper-triangular along an optimal solution $(p, q)$ to $(\mathrm{D}_f)$.
Let $D_t = \frac{\mathrm{d}}{\mathrm{d}t}$ be the differential operator, which maps a smooth function $z(t)$ to $D_t z(t) \coloneqq \dot{z}(t)$.
We define $U(D_t)\coloneqq\diag\{D_t^{-p_i}\}U\diag\{D_t^{p_i}\}$ and $V(D_t)\coloneqq\diag\{D_t^{-q_i}\}V\diag\{D_t^{q_i}\}$.  They are $n\times n$ matrices including differential operator.
Thanks to the upper-triangularity of $(U, V)$, the degree in $D_t$ of non-zero entries of $U(D_t)$ and $V(D_t)$ are nonnegative.

Our method first transforms the DAE $f = 0$ by left-multiplying $f$ by $U(D_t)$, analogously to~\eqref{unimodtrans}.
We further modify the DAE by changing the variable $x(t)$ into $y(t)$ as $x(t) = V(D_t)y(t)$.
Namely, the resulting system $f' = 0$ is a DAE in $y(t)$ expressed as
\begin{equation}
    f'(y, \dot{y}, \dotsc, t) \coloneqq U(D_t) f({V(D_t)}y, {V(D_t)}\dot{y}, \dotsc, t).
\end{equation}
Intuitively, the transformations by $U(D_t)$ and $V(D_t)$ correspond to the ``row and column operations'' by $U$ and $V$ on $f$, respectively, and provably reduce the value of $\hat\delta_f$.

\begin{theorem}\label{main}
    For the DAEs $f=0$ and $f'=0$, we have $\hat\delta_{f'}<\hat\delta_f$.
\end{theorem}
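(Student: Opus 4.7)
The plan is to prove $\hat\delta_{f'}\le\hat\delta_f$ by showing that $(p,q)$ remains dual-feasible for $(\mathrm{D}_{f'})$, and then to upgrade it to a strict inequality by exploiting $\trank UJV<n$ through LP duality with the perfect-matching dual of $(\mathrm{D}_{f'})$. For feasibility, the upper-triangularity of $V$ along $q$ makes $V(D_t)$ a bona-fide differential operator with $(V(D_t)y)_l=\sum_k V_{lk}\,y_k^{(q_k-q_l)}$, so after the substitution $x=V(D_t)y$ each $x_l^{(s)}$ appearing in $f_j$ (for which $s\le\sigma_f(j,l)\le q_l-p_j$) becomes a combination of $y_k^{(q_k-q_l+s)}$ whose orders are bounded by $q_k-p_j$. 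Similarly, upper-triangularity of $U$ along $p$ makes $D_t^{p_j-p_i}$ in the outer composition a genuine differential operator, and taking the $U$-linear combination yields $\sigma_{f'}(i,k)\le q_k-p_i$ for all $i,k$. Hence $(p,q)$ is feasible for $(\mathrm{D}_{f'})$, which already gives $\hat\delta_{f'}\le\hat\delta_f$.

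Next, I would identify the system Jacobian of $f'$ with respect to $(p,q)$ via chain rule. Tracking only the coefficients of the leading-order derivatives $y_k^{(q_k-p_i)}$ in $\partial f'_i/\partial y_k^{(q_k-p_i)}$ and invoking Griewank's lemma, one expects
\[
J^{p,q}_{f'}=U\,J^{p,q}_f\,V=UJV,
\]
so that $\trank J^{p,q}_{f'}=\trank UJV<n$ by the vanishing-pair property of $(U,V)$.

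Finally, to obtain strict inequality I would argue by contradiction. Suppose $\hat\delta_{f'}=\hat\delta_f$; then $(p,q)$ is optimal for $(\mathrm{D}_{f'})$, and by LP duality with maximum-weight perfect bipartite matching on $G_{f'}$, there exists a perfect matching $M$ all of whose edges $(i,k)$ are tight, meaning $\sigma_{f'}(i,k)=q_k-p_i$. Each tight edge corresponds to $\partial f'_i/\partial y_k^{(q_k-p_i)}\ne 0$, i.e., a nonzero entry of $J^{p,q}_{f'}$; hence $M$ exhibits $n$ nonzero entries of $J^{p,q}_{f'}$ in pairwise distinct rows and columns, forcing $\trank J^{p,q}_{f'}\ge n$ and contradicting $\trank UJV<n$. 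The main delicate step will be the Jacobian identification $J^{p,q}_{f'}=UJV$: one must show carefully that the nested composition $U(D_t)f(V(D_t)y,\dotsc,t)$ contributes exactly the product $UJV$ to the leading-order coefficients, relying crucially on the non-negativity of $p_j-p_i$ on the support of $U$ and of $q_k-q_l$ on the support of $V$, so that no lower-order corrections leak into these leading derivatives.
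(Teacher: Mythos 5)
Your proposal matches the paper's proof essentially step for step: you establish feasibility of $(p,q)$ for $(\mathrm{D}_{f'})$ by bounding $\sigma_{f'}(i,k)\le q_k-p_i$ using the upper-triangularity of $U$ and $V$, you identify $J^{p,q}_{f'}=UJ^{p,q}_fV$ via the chain rule and Griewank's lemma, and you derive strict inequality by contradiction from LP complementarity (a tight perfect matching would force $\trank J^{p,q}_{f'}\ge n$, contradicting the vanishing-pair property). The paper carries out the chain-rule computation for the Jacobian identity in full, which you flag as the delicate step and sketch rather than execute, but the structure and all key ideas are the same.
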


\begin{proof}
    Let $(p,q)$ be any optimal solution to $(\mathrm{D}_f)$, $J_f^{p,q}$ be the system Jacobian, and let $g=U(D_t)f.$  Note that
    \begin{equation}\label{eq:rowtrans}
        g_i \coloneqq (U(D_t)f)_i= \sum_{k=1}^n U_{ik}f_k^{(p_k-p_i)} \quad (i \in [n])
    \end{equation}
    and
    \begin{equation}\label{eq:coltrans}
        V(D_t)y^{(l)}(t)=  \prn*{\sum_{j=1}^n V_{mj}y_j^{(q_{j}-q_{m}+l)}(t)}_m \quad (l \in [k])
    \end{equation}
    holds.
    
    To what follows, we show the following:
    \begin{enumerate}[{label={(\arabic*)}}]
        \item $(p,q)$ is also a feasible solution to $(\mathrm{D}_{f'})$, and
        \item $J_{f'}^{p,q}=UJ_f^{p,q}V$.
    \end{enumerate}
    First, we show (1).  Recall from the definition of the sigma function that at most $\sigma_f(k,m)$th derivative of $x_m$ occurs in $f_k$.  Then, from~\eqref{eq:rowtrans}, at most $(\max_k \sigma_f(k,m)+p_k-p_i)$th derivative of $x_m$ appears in $g_i$.  Next, by the change of variables defined in~\eqref{eq:coltrans}, it follows that at most $(\max_m \max_k \sigma_f(k,m)+p_k-p_i+q_j-q_m)$th derivative of $y_j$ appears in $f'_i$.  It means that $\sigma_{f'}(i,j)\leq\max_m \max_k \sigma_f(k,m)+p_k-p_i+q_j-q_m$.  Now
    \begin{align*}
        &\sigma_f(k,m)+p_k-p_i+q_j-q_m \\
        ={}&\sigma_f(k,m)-(q_m-p_k)+q_j-p_i\leq q_j-p_i
    \end{align*}
    follows from the feasibility of $p,q$ in $(\mathrm{D}_f)$.  Hence, $\sigma_{f'}(i,j)\leq q_j-p_i$ follows and $(p,q)$ is a feasible solution to $(\mathrm{D}_{f'})$ follows.

    We proceed to show (2).   from~\eqref{eq:rowtrans} and~\eqref{eq:coltrans},
  \begin{align*}
      (J_f^{p,q})_{i,j}=\pdv{{f'_i}}{y_j^{(q_j-p_i)}}&= \sum_{m=1}^n \pdv{{g}_i}{x_m^{(q_m-p_i)}}V_{mj}\\
      &= \sum_{k=1}^n \sum_{m=1}^n U_{ik}\pdv{{f}_k^{(p_k-p_i)}}{x_m^{(q_m-p_i)}}V_{mj}\\
      &= \sum_{k=1}^n \sum_{m=1}^n U_{ik}\pdv{{f}_k}{x_m^{(q_m-p_k)}}V_{mj}\\&=\sum_{k=1}^n \sum_{m=1}^n U_{ik}[J_f^{p,q}]_{km}V_{mj}
  \end{align*}
  holds.  In the fourth equality, we used Griewank's lemma~\cite[Lemma 3.7]{pryce01}.  Therefore, (2) holds.

  Now it suffices to show that $(p,q)$ is not an optimal solution to $(\mathrm{D}_{f'})$.  Conversely, assume that $(p,q)$ is an optimal solution to $(\mathrm{D}_{f'})$.  In particular, since $(\mathrm{D}_{f'})$ is feasible, there exists a maximum weight matching $\Set{(u_i,v_{\sigma(i)})}{ 1\leq i\leq n}$ for $G_{f'}$.  From the complementarity condition, $(J_{f'}^{p,q} )_{i\sigma(i)}\neq 0$ holds for all $1\leq i\leq n$.  Thus $J_{f'}^{p,q}$ has $n$ nonzero components which do not share the same row and column index.  This contradicts the fact that term-rank of $J_{f'}^{p,q}=UJ_f^{p,q}V$ is less than $n$. Therefore, $(p,q)$ is not an optimal solution to $(\mathrm{D}_{f'})$.  It means that $\hat\delta_{f'}<\hat\delta_{f}$ holds.
  \qedhere
\end{proof}
Moreover, the above transformation ensures the ``equivalence'' of DAEs in the following sense.
First, it is easily checked that $U(D_t)$ is unimodular, i.e., its inverse ${U(D_t)}^{-1}$ is a matrix whose entries are polynomials in $D_t$.
This means that the DAEs $f = 0$ and $U(D_t)f = 0$ can be expressed as linear combinations of equations or their derivatives in the other DAE, implying that their solution sets are identical.
Similarly, due to the unimodularity of $V(D_t)$, the variables $x(t)$ and $y(t)$ can also be expressed as linear combinations of components or their derivatives in the other variable, implying a one-to-one correspondence between the solutions to $f = 0$ and $f' = 0$.
We also emphasize that the equivalence holds globally since $(U, V)$ is a vanishing pair of the system Jacobian at any point.  This is in contrast to the LC-, ES-, substitution, and augmentation methods, where the certificate of singularity is valid only locally.

\begin{example}[{Continued from Example~\ref{ex:robotex-lsm}}]
    Recall that the system Jacobian~\eqref{eq:ex-jacobian} of the robotic arm DAE~\eqref{roboteq1} is approximated by the linear symbolic matrix~\eqref{eq:robot-ls}, which is already a 1CM-matrix.
    Algorithm~\ref{alg_main} finds a vanishing pair $(U,V)$ of~\eqref{eq:robot-ls}, where
    \begin{equation*}
        U=I_5, \quad
        V=\begin{pmatrix}
            1&-1&0&0&0\\
            0&1&0&0&0\\
            0&0&1&0&0\\
            0&0&0&1&1\\
            0&0&0&0&1\\
        \end{pmatrix}.
    \end{equation*}
    Given this $(U, V)$, our method transforms the DAE by changing the variable $x = (\theta_0,\theta_1,\varphi,\tau_0,\tau_1)$ into $y = (y_1, \dotsc, y_5)$ as $\theta_0 = y_1 - y_2$, $\theta_1 = y_2$, $\varphi = y_3$, $\tau_0 = y_4 + y_5$, and $\tau_1 = y_5$.
    Let $f'=0$ be the DAE after this transformation. 
    The sigma function $\sigma_{f'}$ is the same as~\eqref{eq:sigmafunc} except the following: $\sigma_{f'}(1,5)=\sigma_{f'}(2,5)=-\infty,\sigma_{f'}(4,2)=0,$ and $\sigma_{f'}(1,2)=1$.
    Consequently, one optimal solution to $(\mathrm{D}_{f'})$ can be computed as $p=(2,2,0,4,4)^\top$ and $q=(4,4,2,2,0)^\top$, then $\hat\delta_{f'}=0<2=\hat\delta_{f}$ holds as desired.
\end{example}

After regularizing a DAE $f=0$ by iterative applications of our method, we can retrieve the solution with respect to the original variable $x(t)$ as follows.
Let $f^k = 0$ $(0 \le k \le r)$ be a DAE obtained by the $k$th application of our method.
That is, $\hat{\delta}_{f^0} > \dotsb > \hat{\delta}_{f^r}$ and $f^r$ has a nonsingular system Jacobian.
We define a new DAE $f^*=0$ in $2n$ variables $(x(t), z(t))$ as $f^*_i=f^r_i(z(t),\dot{z}(t),\dotsc,t)$ and $f^*_{n+i}=x_i(t)-\prn{V^1(D_t)\cdots V^r(D_t)z(t)}_i$ for $i\in[n]$, where $V^k(D_t)$ is the matrix expressing the change of variables in the $k$th iteration.
This operation retains the nonsingularity of the system Jacobian.

\begin{theorem}\label{nonsingularmaintain}
    The system Jacobian of $f^*=0$ is nonsingular.
\end{theorem}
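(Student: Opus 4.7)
The plan is to exhibit an optimal dual solution $(p^*, q^*)$ to $(\mathrm{D}_{f^*})$ under which the system Jacobian $J_{f^*}^{p^*, q^*}$ has a block-diagonal structure with manifestly nonsingular diagonal blocks. Since the determinant of the system Jacobian is independent of the choice of optimal dual, this is sufficient.

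First I would analyze the $\sigma$-function of $f^*$. Set $W(D_t) \coloneqq V^1(D_t) \cdots V^r(D_t)$ and $e_{ij} \coloneqq \deg_s W(s)_{ij}$, with $s$ formally replacing $D_t$. Because the top $n$ equations $f^*_i = f^r_i(z, \dot z, \dotsc, t)$ involve only $z$, for $i \in [n]$ we have $\sigma_{f^*}(i, z_j) = \sigma_{f^r}(i,j)$ and $\sigma_{f^*}(i, x_j) = -\infty$. The identifying equations $f^*_{n+i} = x_i - (W(D_t) z)_i$ satisfy $\sigma_{f^*}(n+i, x_j) = 0$ when $i=j$ and $-\infty$ otherwise, while $\sigma_{f^*}(n+i, z_j) = e_{ij}$. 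Any finite-weight perfect matching of $G_{f^*}$ must therefore pair $\{f^*_i\}_{i \in [n]}$ bijectively with the $z$-variables and match $f^*_{n+i}$ with $x_i$, giving $\hat\delta_{f^*} = \hat\delta_{f^r}$.

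Next, I would take an integer $C$ large enough that $q^r_j + C > e_{ij}$ for all $i, j \in [n]$, and set $p^*_i \coloneqq p^r_i + C$, $q^*_{z_j} \coloneqq q^r_j + C$ for $i, j \in [n]$, with $p^*_{n+i} \coloneqq q^*_{x_i} \coloneqq 0$. Feasibility of $(p^*, q^*)$ in $(\mathrm{D}_{f^*})$ is immediate: the top-block constraints inherit from $(p^r, q^r)$ because the shift $C$ cancels, the cross constraints $q^r_j + C \geq e_{ij}$ hold by the choice of $C$, and the constraints on the $x$-block are trivial. The objective value telescopes to $\hat\delta_{f^r}$, matching $\hat\delta_{f^*}$, so $(p^*, q^*)$ is optimal.

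Finally, I would evaluate $J_{f^*}^{p^*, q^*}$ block by block, ordering columns as $(z_1, \dotsc, z_n, x_1, \dotsc, x_n)$. The top-left block equals $J_{f^r}^{p^r, q^r}$, nonsingular by hypothesis, since $q^*_{z_j} - p^*_i = q^r_j - p^r_i$. The top-right block vanishes because $f^*_i$ does not involve $x$. The bottom-left entry $(n+i, j)$ is $\partial f^*_{n+i}/\partial z_j^{(q^r_j + C)}$, and because the derivative order $q^r_j + C$ strictly exceeds $e_{ij}$, the highest derivative of $z_j$ appearing in $f^*_{n+i}$, this entry vanishes. The bottom-right block is $I_n$, coming from the $x_i$ term in $f^*_{n+i}$. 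Hence $J_{f^*}^{p^*, q^*} = \diag(J_{f^r}^{p^r, q^r}, I_n)$, which is nonsingular. The main subtlety is avoiding any delicate accounting of the degrees of the iterated differential-operator product $W(D_t) = V^1(D_t)\cdots V^r(D_t)$: the uniform shift by $C$ sidesteps this by exploiting the translation invariance of the dual objective to buy enough slack in the bottom constraints to zero out the off-diagonal block, without disturbing either the $J_{f^r}^{p^r, q^r}$ block or dual optimality.
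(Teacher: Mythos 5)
Your proof is correct, but it follows a genuinely different route from the paper's. The paper reasons through the polynomial matrix $A_{f^*}(s)$: it observes that $A_{f^*}(s)$ has a block-triangular form giving $\deg_s\det A_{f^*}(s)=\deg_s\det A_{f^r}(s)$, then invokes Lemma~\ref{tcflem2} (nonsingularity of the system Jacobian $\Leftrightarrow$ $\deg_s\det A_f(s)=\hat\delta_f$) together with Lemma~\ref{tcflem} ($\deg_s\det A_f(s)\le\hat\delta_f$) and a feasibility argument to pin down $\deg_s\det A_{f^*}(s)=\hat\delta_{f^*}$. You instead bypass $A_f(s)$ entirely: you argue combinatorially that any finite-weight perfect matching in $G_{f^*}$ is forced to pair the first $n$ equations with the $z$-columns and $f^*_{n+i}$ with $x_i$ (since $\sigma_{f^*}(i,x_j)=-\infty$ for $i\le n$ and $\sigma_{f^*}(n+i,x_j)$ is finite only at $j=i$), which immediately gives $\hat\delta_{f^*}=\hat\delta_{f^r}$; you then exhibit an explicit optimal dual $(p^*,q^*)$ by uniformly shifting $(p^r,q^r)$ by a large $C$, and show the corresponding system Jacobian is literally $\diag(J_{f^r}^{p^r,q^r},I_n)$ because the shift pushes $q^*_{z_j}-p^*_{n+i}=q^r_j+C$ above $\deg_s W(s)_{ij}$, zeroing the lower-left block. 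Your approach is more elementary and self-contained, as it avoids Lemmas~\ref{tcflem} and~\ref{tcflem2}; it does rely, as you note and as the paper states with a citation to Murota, on the invariance of the system Jacobian's determinant under the choice of optimal dual. Both proofs are valid; yours is arguably the more transparent one, since it produces a concrete witness for the nonsingularity instead of reasoning through degrees of determinants.
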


By solving $f^*(x(t),\dot{x}(t),\dotsc, z(t),\dot{z}(t),\dotsc,t)=0$, we can obtain a solution to the original DAE $f=f^0(x(t),\dot{x}(t),\dotsc,t)=0$. 

To prove Theorem~\ref{nonsingularmaintain}, we make some preparations.  For a DAE~\eqref{dae}, we define a polynomial matrix $A_f(s)$ as
 \begin{equation*}
A_f(s)=\prn*{\sum_{l=0}^k \pdv{f_i}{x_j^{(l)}}s^l}_{ij}.
\end{equation*}
It was shown that $\hat\delta_f$ serves as an upper bound on $\deg_s \det A_f(s)$.

\begin{lemma}[{\cite[Proposition 2.1]{murota95}}]\label{tcflem}
$\deg_s \det A_f(s) \leq \hat\delta_f$ holds.
\end{lemma}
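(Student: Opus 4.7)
The plan is to bound the degree of each permutation expansion term of $\det A_f(s)$ individually and then apply linear programming duality. The Leibniz expansion gives
\[
    \det A_f(s) = \sum_{\pi \in S_n} \sgn(\pi) \prod_{i \in [n]} {A_f(s)}_{i, \pi(i)},
\]
where $S_n$ is the set of permutations of $[n]$. By the definition of $A_f(s)$, each entry ${A_f(s)}_{ij}$ is a polynomial in $s$ of degree at most $\sigma_f(i,j)$ whenever $\sigma_f(i,j) \ge 0$, and is identically zero whenever $\sigma_f(i,j) = -\infty$. Hence, for every permutation $\pi$ either the product vanishes (if some $\sigma_f(i,\pi(i)) = -\infty$) or has degree at most $\sum_{i \in [n]} \sigma_f(i, \pi(i))$.

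The next step is to pass from this bound per permutation to a bound on the full determinant. Taking the maximum over permutations $\pi$ for which the product is nonzero yields
\[
    \deg_s \det A_f(s) \le \max_{\pi \in S_n} \sum_{i \in [n]} \sigma_f(i,\pi(i)),
\]
where, by convention, the maximum is $-\infty$ if no $\pi$ makes the sum finite. The right-hand side is exactly the optimal value of the maximum-weight perfect matching problem on the complete bipartite graph $G_f$ with edge weights $\sigma_f(i,j)$.

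The final step is to identify this combinatorial optimum with $\hat\delta_f$. The problem $(\mathrm{D}_f)$ is precisely the LP dual of this maximum-weight perfect matching, as noted in Section~\ref{sec_pre}. Since the weights $\sigma_f(i,j)$ are integer-valued (allowing $-\infty$), and since the bipartite matching LP has an integral optimum attained at a permutation, LP duality (or König's theorem in its weighted form) gives
\[
    \max_{\pi \in S_n} \sum_{i \in [n]} \sigma_f(i,\pi(i)) = \hat\delta_f,
\]
with the convention that both sides equal $-\infty$ in the structurally singular case. Combining this with the preceding inequality yields $\deg_s \det A_f(s) \le \hat\delta_f$, as desired.

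The proof is essentially routine once the right viewpoint is adopted; the only subtlety is the careful handling of the $-\infty$ entries, which correspond to non-edges of the support graph and ensure that both the determinant expansion and the LP duality statement remain consistent across the structurally singular boundary case.
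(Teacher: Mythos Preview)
Your proof is correct and follows the standard argument: Leibniz expansion bounds each permutation term by $\sum_i \sigma_f(i,\pi(i))$, and weak LP duality between the assignment problem and $(\mathrm{D}_f)$ then yields $\max_\pi \sum_i \sigma_f(i,\pi(i)) \le \hat\delta_f$ (in fact only weak duality is needed, so your appeal to full equality via strong duality is more than necessary but not wrong). The paper does not give its own proof of this lemma, merely citing \cite{murota95}; your argument is precisely the classical one found there, so there is nothing further to compare.
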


Moreover, the nonsingularity of the system Jacobian is characterized as follows.
\begin{lemma}[{\cite[Proposition 6.2]{murota90}}]\label{tcflem2}

 Suppose $\hat\delta_f>-\infty$.  For any optimal solution $(p,q)$ to $(\mathrm{D}_f)$, $\det J_f^{p,q} \not\equiv 0$ holds if and only if $\deg_s \det A_f(s)$ is equal to $\hat\delta_f$. 
 \end{lemma}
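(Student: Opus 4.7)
The plan is to expand $\det A_f(s)$ via the Leibniz formula and identify the coefficient of $s^{\hat\delta_f}$ in $\det A_f(s)$ with $\det J_f^{p, q}$; once this identification is in place, the biconditional follows immediately from the degree bound of Lemma~\ref{tcflem}.

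First I would write the Leibniz expansion
\begin{equation*}
    \det A_f(s) = \sum_{\pi} \sgn(\pi) \prod_{i \in [n]} A_f(s)_{i, \pi(i)},
\end{equation*}
where $\pi$ ranges over permutations of $[n]$. By definition of $\sigma_f$, each entry $A_f(s)_{i, \pi(i)}$ is a polynomial of degree at most $\sigma_f(i, \pi(i))$ (treated as identically zero when $\sigma_f(i, \pi(i)) = -\infty$). Hence $\deg_s \prod_i A_f(s)_{i, \pi(i)} \le \sum_i \sigma_f(i, \pi(i)) \le \hat\delta_f$, where the last inequality uses that, by LP duality, $\hat\delta_f$ is the maximum weight of a perfect matching in $G_f$. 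This already reproduces Lemma~\ref{tcflem}.

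Next I would compute the coefficient of $s^{\hat\delta_f}$ in $\det A_f(s)$. Only permutations $\pi$ attaining $\sum_i \sigma_f(i, \pi(i)) = \hat\delta_f$ (the maximum-weight matchings) can contribute. Fix an optimal $(p, q)$ of $(\mathrm{D}_f)$; since the feasibility inequalities $q_{\pi(i)} - p_i \ge \sigma_f(i, \pi(i))$ sum to $\sum_j q_j - \sum_i p_i = \hat\delta_f$, every such maximum-weight $\pi$ must satisfy $q_{\pi(i)} - p_i = \sigma_f(i, \pi(i))$ for all $i$. The leading coefficient of $A_f(s)_{i, \pi(i)}$ at that degree is then exactly $\partial f_i / \partial x_{\pi(i)}^{(q_{\pi(i)} - p_i)} = (J_f^{p,q})_{i, \pi(i)}$. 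Conversely, whenever $q_j - p_i > \sigma_f(i, j)$ the entry $(J_f^{p,q})_{i, j}$ is identically zero, so sub-maximum-weight permutations contribute zero to the Leibniz expansion of $\det J_f^{p, q}$ as well. Combining these observations, the coefficient of $s^{\hat\delta_f}$ in $\det A_f(s)$ equals $\sum_{\pi} \sgn(\pi) \prod_i (J_f^{p,q})_{i, \pi(i)} = \det J_f^{p, q}$.

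Finally, since $\deg_s \det A_f(s) \le \hat\delta_f$ by Lemma~\ref{tcflem}, the two conditions $\deg_s \det A_f(s) = \hat\delta_f$ and $\det J_f^{p, q} \not\equiv 0$ are each equivalent to the nonvanishing of this top coefficient, which proves the lemma. I expect the step demanding the most care will be the complementary-slackness argument — ensuring that every max-weight matching uses only edges tight with respect to the chosen $(p, q)$ — but this reduces to summing the feasibility inequalities against strict optimality, a standard LP calculation.
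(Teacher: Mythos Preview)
Your argument is correct and is essentially the standard proof of this fact. Note, however, that the paper does not actually supply a proof of Lemma~\ref{tcflem2}; it simply cites~\cite[Proposition~6.2]{murota90}. So there is no in-paper argument to compare against. Your Leibniz-expansion plus complementary-slackness computation is exactly the route taken in the cited reference: identify the coefficient of $s^{\hat\delta_f}$ in $\det A_f(s)$ with $\det J_f^{p,q}$ by showing that (i) only maximum-weight permutations contribute to that coefficient, (ii) for any optimal $(p,q)$ these permutations use only tight edges, so the relevant leading coefficients are precisely the entries of $J_f^{p,q}$, and (iii) sub-maximal permutations also vanish in the Leibniz expansion of $\det J_f^{p,q}$ because some entry $(J_f^{p,q})_{i,\pi(i)}$ with $q_{\pi(i)}-p_i>\sigma_f(i,\pi(i))$ is identically zero. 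The one point you flagged as delicate --- that every maximum-weight matching is tight for the chosen optimal $(p,q)$ --- is indeed just the summation $\sum_i (q_{\pi(i)}-p_i)=\hat\delta_f=\sum_i\sigma_f(i,\pi(i))$ combined with the termwise feasibility inequalities, forcing equality throughout; this is routine.
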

 Note that $A_f(s)$ contains both $s$ and $x,\dot{x},\dotsc,x^{(k)},t$, and then, $\deg_s \det A_f(s)$ is defined as the maximal integer $l$ such that the coefficient of $s^l$ in $\det A_f(s)$ is not the zero function.

 Now we are ready to show Theorem~\ref{nonsingularmaintain}.
\begin{proof}[Proof of Theorem~\ref{nonsingularmaintain}]
    First, note that $A_{f^*}(s)$ can be written as 
    \begin{equation*}
        A_{f^*}(s)=\begin{pmatrix}
            O&A_{f^r}(s)\\
            I&B(s)
        \end{pmatrix}
    \end{equation*}
    where $B(s)$ is some $n\times n$ polynomial matrix.  Then, $\deg\det A_{f^*}(s)=\deg\det A_{f^{r}}(s)$ follows.  If we can show that $\hat\delta_{f^{r}}\geq \hat\delta_{f^*
    }$, then by Lemma~\ref{tcflem2}, it follows that
    \begin{equation}
        \deg\det A_{f^*}(s)=\deg\det A_{f^{r}}(s)=\hat\delta_{f^{r}}\geq \hat\delta_{f^*},
    \end{equation}
     and by Lemma~\ref{tcflem} $\deg\det A_{f^*}(s) \leq \hat\delta_{f^*}$ follows.  Hence, again by Lemma~\ref{tcflem2}, it follows that the system Jacobian of $f^*=0$ is nonsingular.
    
     Let us show that $\hat\delta_{f^{r}}\geq \hat\delta_{f^*}$.  Let $(p,q)$ be any optimal solution to $\mathrm{D}_{f^{r}}$.  Now, $p'=(p_1+z,\dotsc,p_n+z,0,\dotsc,0)^\top$ and $q'=(0,\dotsc,0,q_1+z,\dotsc,q_n+z)^\top$ is a feasible solution to $\mathrm{D}_{f^*}$, if $z$ is a sufficiently large integer.  Then, $\hat\delta_{f^*} \leq \sum_{i=1}^{2n}q'_i-\sum_{i=1}^{2n}p'_i=\sum_{i=1}^{n}q_i-\sum_{i=1}^{n}p_i=\hat\delta_{f^{r}}$ holds.  Then, it follows that if the system Jacobian of $f^{r}=0$ is nonsingular, then that of $f^*=0$ is also nonsingular.
\end{proof}

\section{Numerical Experiments}\label{sec_num}

\subsection{Experimental Setup}\label{subsec_num_setup}
In this experiment, we apply our regularization method and previous methods, the substitution and augmentation methods, to DAEs from real problems: the robotic arm, the transistor amplifier, and the ring modulator.
The robotic arm has one integer parameter $N \ge 1$ and the DAE is of size $3N+2$. The case of $N = 1$, illustrated in Example~\ref{robotex}, was derived in~\cite{deluca88}.
The other two DAEs are retrieved from~\cite{test}.
The details of these DAEs are given in Section~\ref{sec:experiment-daes}.
We confirmed that these DAEs have singular system Jacobians and the IOT-method does not work for them\footnote{More precisely, for any decomposition~\eqref{half}, there exists an optimal solution $(p,q)$ to the assignment problem for which the system Jacobian is misinterpreted as nonsingular.}.

The proposed method is implemented in MATLAB with Symbolic Math Toolbox for symbolic operations and in C++ for Algorithm~\ref{alg_main} to obtain a vanishing pair.
The arithmetic expansion to obtain linear symbolic matrices was performed by \texttt{expand} function in Symbolic Math Toolbox with setting \texttt{arithmeticOnly} option as \texttt{true}.
As explained in Remark~\ref{rem:physical-param}, we treated physical parameters as functional terms.
For \textbf{Hash} of a function $f$, a string representation obtained by \texttt{char} function is used as $H_f$.

As for the substitution and augmentation methods, we used DAEPreprocessingToolbox~\cite{git}, which is an open-source library implemented in MuPAD with a MATLAB interface. 
We confirmed the nonsingularity of the system Jacobians of the transformed DAEs by the following two ways: (i) random assignment of numerical values to the symbols in the system Jacobian and (ii) \texttt{rank} function in Symbolic Math Toolbox\footnote{\url{https://www.mathworks.com/help/symbolic/rank.html} (accessed February 1, 2024).}.
All experiments were conducted using MATLAB 2020a and GCC 8.1.0 on a laptop with Intel Corei7-8565U CPU and \SI{16}{GB} RAM.

\subsection{DAEs Used in Experiments}\label{sec:experiment-daes}
\paragraph{Robotic Arm.}
We consider a planar robotic arm equipped with $N+2$ joints, $N+1$ arms, and $N+1$ motors, illustrated in Figure~\ref{fig:robotfign}.
The arms and the motors are numbered from $0$ to $N$.
Motor 0 is inelastic and the other motors are elastic.
The joints consist of $N+1$ numbered joints from $0$ to $N$ and one special joint denoted as joint $\mathrm{E}$, which is a massless joint equipped with motors $1$ to $N$.
Joint 0 is the fulcrum fixed at the origin and is equipped with motor $0$.
For $j = 0, \dotsc, N$, arm $j$ connects joints $j$ and $\mathrm{E}$, and is driven by motor $j$.
The mass and the moment of inertia of motor $j(\geq 0)$ are denoted by $m_j$ and $J_j$, respectively.
Similarly, the mass and the moment of inertia of joint $j(\geq 1)$ are written as $m'_j$ and $J'_j$, respectively.
We assume that every arm is massless.
The length of arm $j$ is $l_j$ for $0 \leq j \leq N$.
The elastic constant and the transmission ratio of elastic motor $j$ are written as $K_j$ and $R_j$, respectively, for $j \in [N]$.
We denote the angle between the base and arm 0 as $\theta_0$, the angle between arms 0 and $j$ as $\theta_j$ for $j \in [N]$, the rotation angle of motor $j$ as $\varphi_j$ for $j \in [N]$, and the torque on motor $j$ as $\tau_j$ for $j = 0, \dotsc, N$.

We consider the problem of controlling the torques $\tau_0, \dotsc, \tau_N$ to align the horizontal coordinate of joint $\mathrm{E}$ at time $t$ as $p_0(t)$ and that of joint $j$ as $p_j(t)$ for $j \in [N]$.
The DAE expressing the dynamics of $x\coloneqq[\theta_0,\dotsc,\theta_N,\varphi_1,\dotsc,\varphi_N,\tau_0,\dotsc,\tau_N]^\top$ is given by
\begin{gather}
    \left\{\begin{aligned}\ddot{x}+\frac{1}{Q} \begin{pNiceMatrix}
        1    & a^\top & -\mathbf{1}_N^\top\\
        a    & Q\diag\{\beta_j^{-1}\}+aa^\top & a\mathbf{1}_N^\top\\
        -\mathbf{1}_N & \mathbf{1}_Na^\top & Q\diag\{J_j^{-1}\}+\mathbf{1}_N\mathbf{1}_N^\top
    \end{pNiceMatrix}z&=0,\\
    l_0\cos\theta_0=p_0(t)\smash{,}\phantom{\quad (j \in [N])} \\
    l_0\cos\theta_0+l_j\cos\theta_j=p_j(t) \quad (j \in [N]),
    \end{aligned}\right.\label{roboteq} \\
\shortintertext{where}
    Q=\alpha+2\sum_{j=1}^N\gamma_j\cos\theta_j-\sum_{j=1}^N J_j-\sum_{j=1}^N \frac{1}{\beta_j}(\beta_j+\gamma_j\cos\theta_j)^2, \notag\\
    z=\begin{pmatrix}
        -\sum_{j=1}^N (2\dot\theta_0\dot\theta_j-\dot\theta_j^2)\gamma_j\sin\theta_j-\tau_0\\
        \left[\gamma_j\dot\theta_0^2\sin\theta_j+K_j(\theta_j-\varphi/R_j)\right]_j\\
        \left[K_j/R_j \cdot (\varphi/R_j-\theta_j)-\tau_j \right]_j
    \end{pmatrix},a=\left[-1 - \gamma_j / \beta_j \cdot \cos \theta_j \right]_j, \notag\\
    \alpha=J_0+\sum_{j=1}^N(J_j+m_jl_0^2+m'_j(l_0^2+l_j^2)+J_j'), \beta_j=m_j'l_j^2+J_j',  \gamma_j = m_j'l_0l_j, \notag
\end{gather}
and $\mathbf{1}_N$ denotes the all-one vector of dimension $N$.
In the experiment, we set $p_0(t)=l_0\cos(1-\mathrm{e}^t)$ and $p_j(t)=l_0\cos(1-\mathrm{e}^t)+l_j\cos(1-jt)$ for $j \in [N]$.

\begin{figure}
    \centering
    \includegraphics[width=12cm]{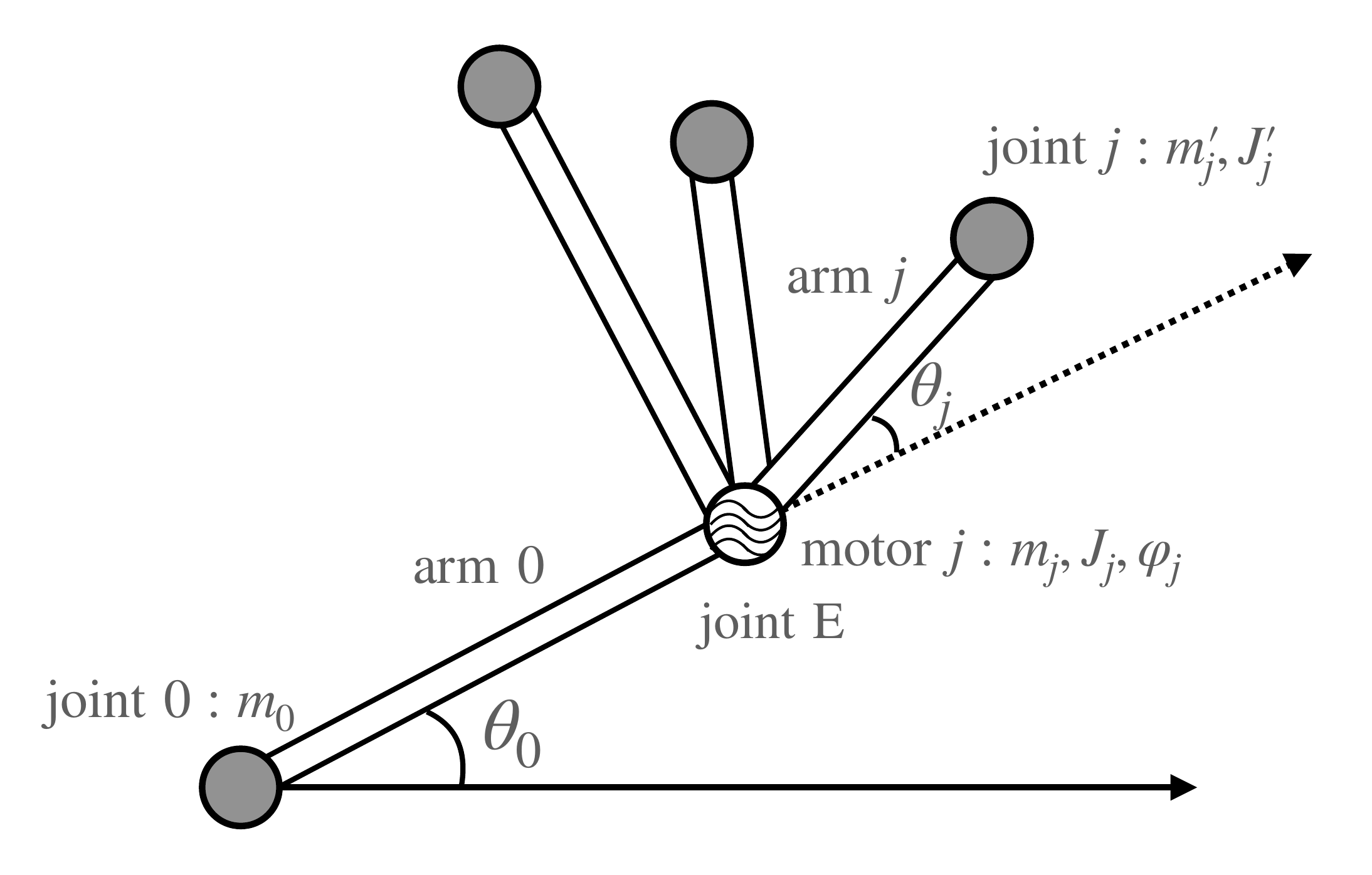}
    \caption{The robotic arm whose behavior is described by~\eqref{roboteq}.}\label{fig:robotfign}
\end{figure}

\paragraph{Transistor amplifier~(Index 1).}
A DAE in size 8
\begin{align*}
  \left\{\begin{aligned}
    C_1(\dot{x}_1-\dot{x}_2) + (x_1-U_e(t))/R_0 &= 0, \\
    - C_1(\dot{x}_1 - \dot{x}_2) - U_b/R_2 + x_2 (1/R_1 + 1/R_2) - (\alpha - 1)e(x_2 - x_3) &= 0, \\
     C_2 \dot{x}_3 + x_3/R_3 - e(x_2 - x_3) &= 0, \\
     C_3 (\dot{x}_4 - \dot{x}_5) + (x_4 - U_b)/R_4 + \alpha  e(x_2 - x_3) &= 0, \\
     - C_3 (\dot{x}_4 - \dot{x}_5) - U_b/R_6 + x_5 (1/R_5 + 1/R_6) - (\alpha  - 1)e(x_5 - x_6) &= 0, \\
     C_4 \dot{x}_6  + x_6/R_7- e(x_5 - x_6) &= 0, \\
     C_5 (\dot{x}_7 - \dot{x}_8) + (x_7 - U_b)/R_8 + \alpha  e(x_5 - x_6) &= 0, \\
     - C_5 (\dot{x}_7 - \dot{x}_8) + x_8/R_9 &= 0
  \end{aligned}\right.
\end{align*}
represents the behavior of a transistor amplifier, where $e(x)=\beta(\exp(x/U_F)-1)$ and $U_e(t)=0.1\sin(200\pi t)$.
Here, $C_1, \dotsc, C_5$, $R_0, \dotsc, R_9$, $U_b,\alpha$, and $\beta$ are physical parameters, and are treated in the experiment as independent symbols.  See~\cite{test} for the derivation.

\paragraph{Ring modulator~(Index 2).}
A nonlinear DAE in size 15
\begin{align*}
      \left\{\begin{aligned}
        \dot{x}_1 + (x_1/R - x_{8} + 0.5x_{10} - 0.5x_{11} - x_{14})/C &= 0, \\
        \dot{x}_2 + (x_{2}/R - x_{9} + 0.5x_{12} - 0.5x_{13} - x_{15})/C &= 0, \\
        x_{10} - q(U_{D1}) + q(U_{D4}) &= 0, \\
        x_{11} - q(U_{D2}) + q(U_{D3}) &= 0, \\
        x_{12} + q(U_{D1}) - q(U_{D3}) &= 0, \\
        x_{13} + q(U_{D2}) - q(U_{D4})  &= 0, \\
        \dot{x}_7 + (x_{7}/R_p - q(U_{D1}) - q(U_{D2})  + q(U_{D3}) + q(U_{D4}))/C_p &= 0, \\
        \dot{x}_8 + x_{1}/L_h &= 0, \\
        \dot{x}_9 + x_{2}/L_h &= 0, \\
        \dot{x}_{10} + (-0.5x_1 + x_3 + R_{g2} x_{10})/L_{s2} &= 0, \\
        \dot{x}_{11} + (0.5x_1 - x_4 + R_{g3} x_{11})/L_{s3} &= 0, \\
        \dot{x}_{12} + (-0.5x_2 + x_5 + R_{g2} x_{12})/L_{s2} &= 0, \\
        \dot{x}_{13} + (0.5x_2 - x_6 + R_{g3} x_{13})/L_{s3} &= 0, \\
        \dot{x}_{14} + (x_1 + (R_{g1} + R_i)x_{14} - U_{\mathrm{in1}}(t) )/L_{s1} &= 0, \\
        \dot{x}_{15} + (x_2 + (R_c + R_{g1})x_{15} )/L_{s1} &= 0
      \end{aligned}\right.
    \end{align*}
    describes the behavior of a ring modulator.  The above system is obtained by setting $C_s = 0$ in the equation in~\cite{test}.  Here, $R,R_c,R_p,R_i,R_{g1}$, $R_{g2},R_{g3},C,C_p,L_h,L_{s1},L_{s2}$, and $L_{s3}$ are physical constants and
    \begin{align*}
      &U_{D1} = x_3 - x_5 - x_7 - U_{\mathrm{in2}}(t),
      \quad
      U_{D2} = -x_4 + x_6 - x_7 - U_{\mathrm{in2}}(t),
      \\&
      U_{D3} =  x_4 + x_5 + x_7 + U_{\mathrm{in2}}(t),
      \quad
      U_{D4} = -x_3 - x_6 + x_7 + U_{\mathrm{in2}}(t),
      \\&
      q(U) = \gamma \prn{\e^{\delta U} - 1},
      \,
      U_{\mathrm{in1}}(t) = 0.5 \sin 2000 \pi t,
      \,
      U_{\mathrm{in2}}(t) = 2 \sin 20000 \pi t.
    \end{align*}
\subsection{Experimental Results}\label{subsec_num_res}
\begin{table}[t]
 \centering
 \caption{Comparison of the running time [sec] required to regularize the system Jacobian. 
 Annotation (*) indicates that the method did not stop within an hour.}\label{tbl:result}
  \begin{tabular}{crrrrr}\toprule
   DAE &
   \multicolumn{1}{c}{$N$} &
   \multicolumn{1}{c}{Sub.} &
   \multicolumn{1}{c}{Aug.} &
   \multicolumn{1}{c}{\textbf{Proposed}} \\\midrule
   Transistor amplifier & --- & 0.5 &0.3&3.0 \\
   Ring modulator & --- &(*)&0.9&4.6 \\
   Robotic arm & 1 &4.0&0.6&1.6 \\
   & 2 &5.3&1.0&3.1 \\
   & $3$ &8.1&1.8&5.4 \\
   & $4$ &14.3&3.3&8.8 \\
   & $5$ &25.5&6.7&12.7 \\
   & $6$ &40.6&13.6&18.9 \\
   & $7$ &68.3&23.1&26.4 \\
   & $8$ &100.3&39.5&37.7 \\
   & $9$ &149.8&62.3&53.2 \\
   & $10$ &236.6&103.7&44.6 \\
   & $11$ &298.7&177.6&59.0 \\
   & $12$ &401.1&292.2&76.6 \\
   & $13$ &598.6&635.2&100.6 \\
   & $14$ &816.5&1681.0&200.9 \\
   & $15$ &1254.2&3273.1&262.0 \\
   & $16$ &1632.2&(*)&334.7 \\
   & $17$ &2358.7&(*)& 431.2\\
   & $18$ &2713.6&(*)& 559.2\\
   & $19$ &(*)&(*)& 728.2\\
   & $20$ &(*)&(*)&936.7 \\
   \bottomrule
  \end{tabular}
\end{table}
\begin{figure}[t]
\centering
\includegraphics[width=12cm]{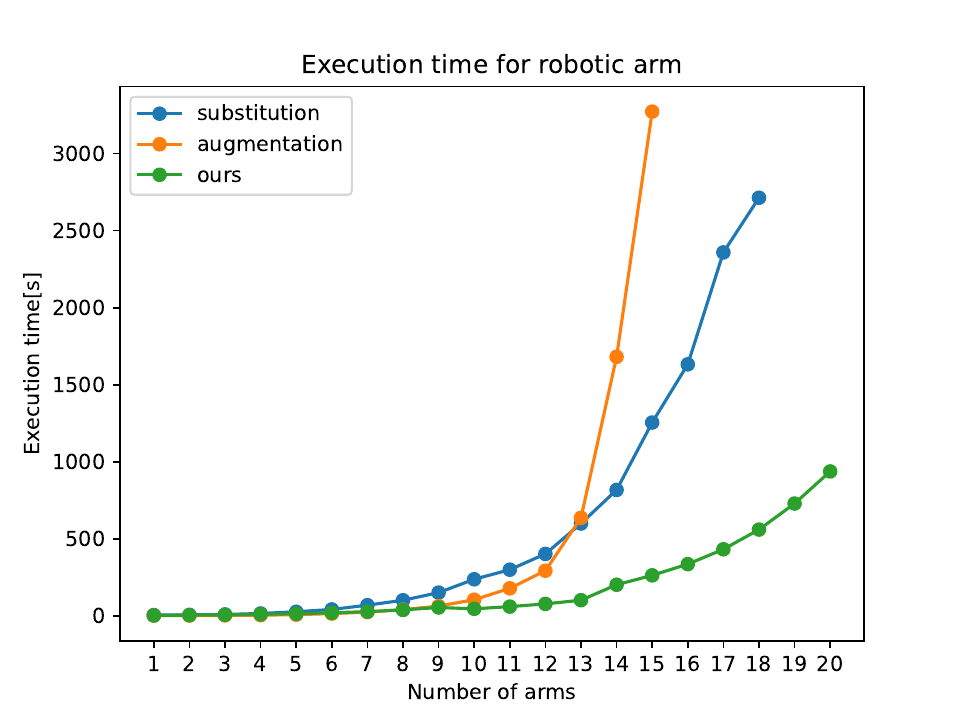}
\caption{Running time for the robotic arm DAE~\eqref{roboteq}}\label{fig:exp_robot}
\end{figure}
Our proposed method successfully returned DAEs with nonsingular system Jacobians for all the DAEs used in the experiment.
Table~\ref{tbl:result} shows the computation time for each method.  For the robotic arm, we conducted the experiment by changing the number $N$ of arms from $1$ to $20$.
Figure~\ref{fig:exp_robot} plots the graph of the running times versus $N$ in the DAE~\eqref{roboteq}.
For the robotic arm DAE with large $N$, our method runs much faster than existing methods.
We also observed the low-rank structures of the coefficient matrices of the linear symbolic matrix $J_\mathrm{LS}$ approximating the system Jacobians.  For the robotic arm and the transistor amplifier, the ranks of the coefficient matrices of $J_\mathrm{LS}$ are all one.  Hence, we did not need to apply the rank-one decomposition~\eqref{rb} to obtain a 1CM-matrix $J_\mathrm{1CM}$.  For the ring modulator, eight out of ten coefficient matrices have rank one but the other two are of rank two.
Fortunately, these rank-two matrices do not contribute to the singularity of $J_\mathrm{LS}$, hence $\rank J_\mathrm{1CM} = \rank J_\mathrm{LS}$ hold.

\section{Conclusion}\label{sec_conclusion}
This paper has proposed a fast and globally-equivalent regularization method for nonlinear DAEs, extending the IOT method.
In Phase~3 of the combinatorial relaxation framework, our method approximates the system Jacobian by a 1CM-matrix, for which we have developed a fast combinatorial algorithm to obtain a vanishing pair.
Though numerical experiments, we have confirmed that our method successfully regularizes DAEs from real instances and runs fast for large-scale DAEs.

A future challenge lies in proposing more efficient and provable method for converting the system Jacobian into a linear symbolic matrix. Another future work involves developing a combinatorial algorithm to obtain a certificate of singularity for linear symbolic matrices with low-rank coefficients, without lying on the rank-one decomposition.

\section*{Acknowledgments}
The authors thank Satoru Iwata, Mizuyo Takamatsu, and Tasuku Soma for their helpful comments and discussions.
This work is supported by JSPS KAKENHI Grant Number JP22K17853 and JST ERATO Grant Number JPMJER1903.

\clearpage
\appendix
\section*{Appendix}

\section{DAEs with System Jacobians Approximated by 1CM-Matrices}\label{app-1cm}

In this section, we consider a class of DAEs which can be expressed as a composition of linear and nonlinear expressions in certain form.
DAEs in this class have a property that the linear symbolic matrices $J_\mathrm{LS}$ constructed from the system Jacobians can be 1CM-matrices.
We also explain that this class subsumes DAEs arising from several real problems: electrical circuits modeled via the modified nodal analysis, interacting multi-body systems, and their linear transformations.

\subsection{DAE Form Description}

Consider a DAE $f = 0$ in size $n$ for an unknown variable $x = (x_1,\dotsc,x_n)^\top$ in the form of
\begin{align}\label{rank1}
    \begin{aligned}
    &f(x,\dotsc,x^{(k)},t)\\={}&g(t)+\sum_{l=0}^k B_l x^{(l)}+ \sum_{l=0}^k\sum_{m=1}^{N(l)} h_{lm} \prn[\big]{{a_{lm}}^\top x^{(l)}} b_{lm},
    \end{aligned}
\end{align}
where $g:\mathbb{R}\to\mathbb{R}^n$ is a smooth function, $\{B_l\}$ are $n\times n$ constant matrices, $N(l)$ is a non-negative integer, $\{a_{lm}\}$ and $\{b_{lm}\}$ are $n$-dimensional constant vectors, and $h_{lm}:\mathbb{R}\to \mathbb{R}$ is a smooth function for $l = 0, \dotsc, k$ and $m \in [N(l)]$.
We can show that the system Jacobian of the DAE~\eqref{rank1} has the rank-one structure as follows.

\begin{theorem}\label{1cmgeneral}
    The system Jacobian of~\eqref{rank1} can be expanded as
    \begin{equation}\label{1cmexpand}
J_f^{p,q}=A_0+\sum_{l=0}^k\sum_{m=1}^{N(l)}h'_{lm} \prn[\big]{{a_{lm}}^\top x^{(l)}} A^h_{lm},
    \end{equation}
    where $h'_{lm}$ is the derivative of $h_{lm}$, $A_0$ is a constant matrix of arbitrary rank, and $\{A^h_{lm}\}$ are constant matrices with ranks at most one.
\end{theorem}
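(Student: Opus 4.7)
The plan is to compute the entries of $J_f^{p,q}$ term-by-term from~\eqref{rank1} and then to verify that each matrix multiplied by $h'_{lm}$ has rank at most one. First I will observe that the term $g(t)$ contributes nothing to $\partial f_i / \partial x_j^{(q_j - p_i)}$, the linear term $\sum_l B_l x^{(l)}$ contributes the constant value $(B_{q_j - p_i})_{ij}$ (adopting the convention $B_r = O$ when $r \notin \{0, \dotsc, k\}$), and each nonlinear term $h_{lm}(a_{lm}^\top x^{(l)}) b_{lm}$ contributes $h'_{lm}(a_{lm}^\top x^{(l)})(b_{lm})_i (a_{lm})_j$ exactly when $l = q_j - p_i$, and zero otherwise. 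Setting $(A_0)_{ij} \coloneqq (B_{q_j - p_i})_{ij}$ and defining $A^h_{lm}$ by $(A^h_{lm})_{ij} \coloneqq (b_{lm})_i (a_{lm})_j$ if $q_j - p_i = l$ and $0$ otherwise then yields the claimed expansion~\eqref{1cmexpand}, so the entire content of the theorem reduces to the bound $\rank A^h_{lm} \leq 1$.

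The key observation is that $A^h_{lm}$ is obtained from the rank-one matrix $b_{lm} a_{lm}^\top$ by zeroing every entry $(i,j)$ with $q_j - p_i \neq l$, i.e., a Hadamard restriction to the ``staircase'' pattern $\{(i,j) : q_j - p_i = l\}$. If every pair of nonzero rows of $A^h_{lm}$ turns out to share the same column support, those rows will be scalar multiples of each other and $\rank A^h_{lm} \leq 1$ follows at once. Thus it suffices to show that whenever rows $i_1$ and $i_2$ are both nonzero, $p_{i_1} = p_{i_2}$. Pick witnesses $j_1, j_2$ with $(b_{lm})_{i_k}, (a_{lm})_{j_k} \neq 0$ and $q_{j_k} - p_{i_k} = l$ for $k = 1, 2$. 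Since $(b_{lm})_{i_1}(a_{lm})_{j_2} \neq 0$ and $(b_{lm})_{i_2}(a_{lm})_{j_1} \neq 0$, the nonlinear term makes $x_{j_2}^{(l)}$ appear in $f_{i_1}$ and $x_{j_1}^{(l)}$ appear in $f_{i_2}$, so $\sigma_f(i_1, j_2) \geq l$ and $\sigma_f(i_2, j_1) \geq l$. The feasibility of $(p, q)$ in $(\mathrm{D}_f)$ now yields $q_{j_2} - p_{i_1} \geq l$ and $q_{j_1} - p_{i_2} \geq l$; combined with $q_{j_k} = p_{i_k} + l$, these force $p_{i_2} \geq p_{i_1}$ and $p_{i_1} \geq p_{i_2}$, hence $p_{i_1} = p_{i_2}$ as required.

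The crux of the argument is precisely this rank-one verification: absent the feasibility constraint on $(p, q)$, the Hadamard restriction of $b_{lm} a_{lm}^\top$ to the staircase pattern could support distinct rows on disjoint column sets and thereby inflate the rank above one, so the substance of the proof lies in invoking $(\mathrm{D}_f)$-feasibility to rule out this obstruction. The rest is bookkeeping of partial derivatives, together with Griewank's lemma to write the system Jacobian in the form $\partial f_i / \partial x_j^{(q_j - p_i)}$.
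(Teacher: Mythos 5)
Your proof is correct and takes essentially the same approach as the paper's: both reduce the claim to showing $\rank A^h_{lm}\le 1$, and both establish it by combining the observation that $\sigma_f(i,j)\ge l$ whenever $(b_{lm})_i(a_{lm})_j\ne 0$ with the feasibility constraint $q_j-p_i\ge\sigma_f(i,j)$ to conclude that the support of $A^h_{lm}$ is a block (the paper sorts the indices and argues the support lives on the minimizers of $q_j-p_i$; you argue directly that any two nonzero rows share the same $p_i$, hence the same column support, hence are proportional). The difference is purely presentational.
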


By Theorem~\ref{1cmgeneral}, the linear symbolic matrix \begin{equation*}
J_{\mathrm{LS}}=A_0+\sum_{l=0}^k\sum_{m=1}^{N(l)}\alpha_{lm}A^h_{lm}
\end{equation*} obtained by replacing each $h'_{lm}\prn[\big]{{a_{lm}}^\top x^{(l)}}$ with a distinct symbol $\alpha_{lm}$ is a 1CM-matrix, i.e., $J_\mathrm{LS} = J_\mathrm{1CM}$ holds.
If the nonzero nonlinear terms are generic, that is, $\Set{h'_{lm}\prn[\big]{{a_{lm}}^\top x^{(l)}} }{h'_{lm}\prn[\big]{{a_{lm}}^\top x^{(l)}} \not\equiv 0}$ do not have any algebraic relationship, then the nonsingularity of $J_f^{p,q}$ and $J_{\mathrm{1CM}}$ are equivalent.

We give the proof of Theorem~\ref{1cmgeneral} in Section~\ref{sec:1cmgeneral-proof} after presenting examples of DAEs in the form of~\eqref{rank1} in Section~\ref{sec:1cm-examples}.

\subsection{Examples}\label{sec:1cm-examples}
\paragraph{Modified nodal analysis for electrical circuits.}
Modified nodal analysis~(MNA,~\cite{ho75}) is a widely-used modeling approach for electrical circuits.
Consider an electrical circuit consisting of nonlinear resistors, capacitors, inductors, independent current sources, and voltage sources.
If the potential difference between the ends of the $k$th resistor (capacitor) is $E$, then the current passing through the element is written as $r_k(E)$ (resp.\ $c_k(\dot{E})$), where $r_k$ (resp.\ $c_k$) is some nonlinear function describing the characteristics of the element.
Similarly, if the current passing through the $k$th inductor is $I$, then the potential difference between the ends of the inductor is written as $l_k(\dot{I})$, where $l_k$ denotes some nonlinear characteristics.  Suppose that the circuit has $N+1$ nodes, comprised of one reference node and other $N$ nodes.  The sum of the numbers of inductors and voltage sources (we call them as voltage-controlled elements) is denoted by $M$.  We denote the potential of the $k$th node as $E_k$ for $k \in [N]$ and set the potential of the reference node as 0.  We also write the current passing through the $k$th voltage-controlled element as $I_k$ for $k \in [M]$.  The modified nodal analysis constructs a DAE in size $N+M$ composed of
\begin{itemize}
    \item $N$ equations which formulate Kirchhoff's circuit law (i.e.\ the sum of the current flowing into a node is zero) at non-reference nodes, and
    \item $M$ equations expressing the potential difference of the ends of the $k$th voltage-controlled element using $I_k$.
\end{itemize}

\begin{figure}
    \centering
    \includegraphics[width=4cm]{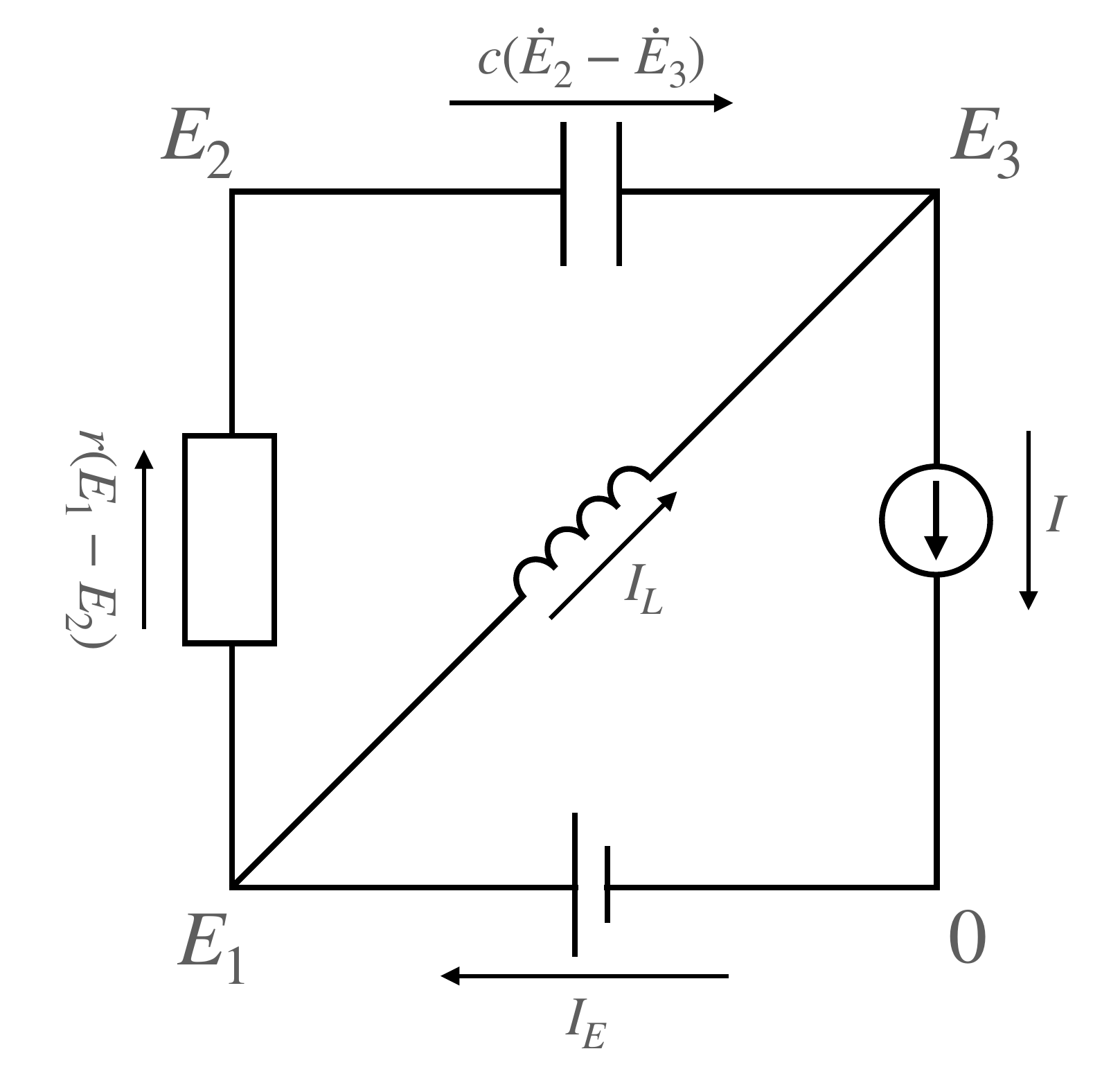}
    \caption{Electrical circuit described by~\eqref{mna}.}\label{elec}
\end{figure}

We can check that the resulting DAE can be expressed in the form of~\eqref{rank1}, because the DAE contains only nonlinear terms expressed as $r_k(E_{s_k}-E_{t_k}),c_k(\dot{E}_{s_k}-\dot{E}_{t_k}),$ or $l_k(\dot{I}_{k})$, where $s_k$ and $t_k$ denotes the ends of the $k$th voltage-controlled element and ${I}_{k}$ denotes the current passing through the $k$th inductor.

\begin{example}\label{ex:mna}
    Consider a circuit with $N=3$ and $M=2$ illustrated in Figure~\ref{elec}.
    For this circuit, the MNA outputs the following DAE in variable $x = (E_1,E_2,E_3,I_E,I_L)$:
    \begin{align}\label{mna}
        \left\{\begin{aligned}
            I_E-I_L-r(E_1-E_2)&=0,\\
            r(E_1-E_2)-c(\dot{E}_2-\dot{E}_3)&=0,\\
            c(\dot{E}_2-\dot{E}_3)+I_L-I&=0,\\
            E_1-E&=0,\\
            l(\dot{I}_L)-E_3+E_1&=0,
        \end{aligned}\right.
    \end{align}
    where $I$ and $E$ are the current of the current source and the electromotive force of the voltage source, respectively.
    The DAE~\eqref{mna} can be written in the form~\eqref{rank1}, where $k = 1$, $N(0) = 1$, $N(1) = 2$,
    \begin{gather*}
        g(t)=\begin{pmatrix}
            0\\0\\-I\\-E\\0
        \end{pmatrix},\quad B_0=\begin{pmatrix}
            0&0&0&1&-1\\
            0&0&0&0&0\\
            0&0&0&0&1\\
            1&0&0&0&0\\
            1&0&-1&0&0
        \end{pmatrix}, \quad B_1 = O,\\
        h_{01}(z)=r(z), \quad a_{01}=(1,-1,0,0,0)^\top, \quad b_{01}=(-1,1,0,0,0)^\top,\\
        h_{11}(z)=c(z), \quad a_{11}=(0,1,-1,0,0)^\top, \quad b_{11}=(0,-1,1,0,0)^\top,\\
        h_{12}(z)=l(z), \quad a_{12}=(0,0,0,0,1)^\top, \quad b_{12}=(0,0,0,0,1)^\top.
    \end{gather*}

    We check that Theorem~\ref{1cmgeneral} actually holds for the DAE~\eqref{mna}.
    An optimal solution to the dual problem of the assignment problem is $p=(0,0,0,0,0)^\top$ and $q=(0,1,1,0,1)^\top$, and the system Jacobian is calculated as
    \begin{equation*}
        J=\begin{pmatrix}
            -r'(E_1-E_2)&0&0&1&0\\
            r'(E_1-E_2)&-c'(\dot{E}_2-\dot{E}_3)&c'(\dot{E}_2-\dot{E}_3)&0&0\\
            0&c'(\dot{E}_2-\dot{E}_3)&-c'(\dot{E}_2-\dot{E}_3)&0&0\\
            1&0&0&0&0\\
            1&0&0&0&l'(\dot{I}_L)
        \end{pmatrix},
    \end{equation*}
    which is singular.
    By mapping $r'(E_1-E_2)\mapsto \alpha_1,c'(\dot{E}_2-\dot{E}_3)\mapsto\alpha_2,$ and $l'(\dot{I}_L)\mapsto \alpha_3$ in $J$, we can construct a linear symbolic matrix
    \begin{equation*}
        J_{\mathrm{LS}}=\begin{pmatrix}
            -\alpha_1&0&0&1&0\\
            \alpha_1&-\alpha_2&\alpha_2&0&0\\
            0&\alpha_2&-\alpha_2&0&0\\
            1&0&0&0&0\\
            1&0&0&0&\alpha_3
        \end{pmatrix},
    \end{equation*}
    which is a 1CM-matrix.
    We can also confirm that the matrix $J_\mathrm{1CM}$ is still singular.
\end{example}

\paragraph{Multi-body dynamical systems.}
Let us consider a multi-body system in which $n$ mass points $X_1,\dotsc,X_n$ interacts.  Now assume that these mass points have unit mass and are arranged in one-dimensional space: extension to higher dimensions is easy.  Let $x_1,\dotsc,x_n$ be the coordinates of $X_1,\dotsc,X_n$, respectively.
Let $G=(X,E)$ be an undirected graph whose vertex set is $X=\{X_1,\dotsc,X_n\}$.
We add an edge $\{X_i, X_j\}$ to $E$ if $X_i$ and $X_j$ interact each other, and assume that the force between $X_i$ and $X_j$ can be written as $k_{ij}(x_i-x_j)$ or $k_{ij}(\dot x_i-\dot x_j)$ using some nonlinear smooth function $k_{ij}$.  Many types of interactions, such as elastic force, electrostatic force, and viscous force by dampers, can be written in this form.
Without loss of generality, we assume that the force $k_{ij}(x_i-x_j)$ or $k_{ij}(\dot x_i-\dot x_j)$ is applied to $X_i$ if $i<j$.
Then, by Newton's Third Law, the force of the opposite direction, i.e., $-k_{ij}(x_i-x_j)$ or $-k_{ij}(\dot x_i-\dot x_j)$, is applied to $X_j$.  
The equation of motion is written as follows:
\begin{align}\label{eq:multibody-ode}
    \left\{ 
    \begin{aligned}
        \ddot{x_i}&=\sum_{i<j,(X_i,X_j)\in E}[k_{ij}(x_i-x_j)~\text{or}~k_{ij}(\dot x_i-\dot x_j)]\\&-\sum_{j<i,(X_i,X_j)\in E}[k_{ij}(x_i-x_j)~\text{or}~k_{ij}(\dot x_i-\dot x_j)]~(i\in[n]).
    \end{aligned}
    \right.
\end{align}
We can check that the DAE~\eqref{eq:multibody-ode} is in the form of~\eqref{rank1}, though it is indeed an ODE.
DAEs appear, for example, in the following situation: consider giving $d (\leq n)$-dimensional input $u(t)$ to the system and control some of $\{x_1,\dotsc,x_n\}$.  Let $p(t)$ be a $d$-dimensional reference signal.  Our goal is to control $Bx$ to be $p(t)$, giving external forces $Au(t)$ to the system, where $x=[x_1,\dotsc,x_n]^\top$, $A\in\mathbb{R}^{n\times d}$, and $B\in\mathbb{R}^{d\times n}$.  We obtain a DAE in size $n+d$ for $x(t)$ and $u(t)$ as follows:
\begin{align}\label{control}
    \left\{ 
    \begin{aligned}
        \ddot{x_i}&=\sum_{i<j,(X_i,X_j)\in E}[k_{ij}(x_i-x_j)~\text{or}~k_{ij}(\dot x_i-\dot x_j)]\\&-\sum_{j<i,(X_i,X_j)\in E}[k_{ij}(x_i-x_j)~\text{or}~k_{ij}(\dot x_i-\dot x_j)]+(Au)_i~(i\in[n]),\\
        Bx&=p.
    \end{aligned}
    \right.
\end{align}
The DAE~\eqref{control} can also be written in the form of~\eqref{rank1}.

\paragraph{Robustness to linear transformations}
The DAE~\eqref{rank1} has another desirable property that it is robust to linear transformations of equations and variables.
More concretely, for a DAE $f=0$ written as~\eqref{rank1}, consider applying the following transformations to get a new DAE $f'(y, \dot{y}, \dotsc, t) = 0$: $f'_i=\sum_{j=1}^n c_{ij}f_j$ and $y_i=\sum_{j=1}^n d_{ij}x_j$ for $i \in [N]$ with some constants $\{c_{ij}\}$ and $\{d_{ij}\}$; we assume that matrices $C=\prn*{c_{ij}}_{ij}$ and $D=\prn*{d_{ij}}_{ij}$ are invertible.
Then, we can check that $f'=0$ is written in the form of~\eqref{rank1} as
\begin{align*}
    &f'(y,\dot{y},\dotsc,t)\\
    ={} &Cg(t)+\sum_{l=0}^k (CB_l D^{-1})y^{(l)}+\sum_{l=0}^k\sum_{m=1}^{N(l)}h_{lm} \prn[\big]{({a_{lm}}^\top D^{-1}) y^{(l)}} (Cb_{lm}).
\end{align*}
By Theorem~\ref{1cmgeneral}, the system Jacobian of the transformed DAE $f' = 0$ can also be approximated by a 1CM-matrix.
This kind of robustness does not hold in the approximation by layered mixed matrices, since they are not closed under linear transformations.

\subsection{Proof of Theorem~\ref{1cmgeneral}}\label{sec:1cmgeneral-proof}
\begin{proof}[Proof of Theorem~\ref{1cmgeneral}]
    It can be directly checked that system Jacobian $J_f^{p,q}$ of~\eqref{rank1} is expanded as ~\eqref{1cmexpand} using some constant matrices $A_0$ and $\{A^h_{lm}\}$. 
    We show that the ranks of $\{A^h_{lm}\}$ are at most one. 
    For a matrix $A$ in size $n$, its \emph{support} is defined as $S\coloneqq\{(i,j)\mid A_{ij}\neq 0\}\subseteq[n]^2$.  We call a support $S$ is in the \emph{block form} if there exist $I\subseteq[n]$ and $J\subseteq[n]$ such that $S=I\times J$.  Now, the support of $A^h_{lm}$ is a subset of the support of $C_{lm}\coloneqq b_{lm}a_{lm}^\top$, because from the definition of the system Jacobian, the $(i,j)$ entry of $A^h_{lm}$ is equal to that of $C_{lm}$ if $q_j-p_i=l$ holds and otherwise equal to zero.
 In the following, we show that the support of $A^h_{lm}$ is in the block form, which implies $\rank A^h_{lm} \le 1$.
 
 Let $S$ be the support of $C_{lm}$.
 Clearly, $S$ is in the block form because $\rank C_{lm}\leq 1$, and without loss of generality we can assume that $S=[a]\times[b]$ for some $0\leq a,b\leq n$, where $[0]:=\emptyset$.  We can also assume that $p_1\geq \cdots \geq p_a$ and $q_1\leq \cdots \leq q_b$ hold.  Note that if $(i,j)\in S$, then $\sigma(i,j)\geq l$ holds.  By the definition of system Jacobian, $(i,j)$ entry of $A^h_{lm}$ is nonzero if and only if $q_j-p_i=l$ holds.
    If $q_1-p_1>l$, then for all $i,j\in S$, $q_j-p_i\geq q_1-p_1>l$ holds.  Then, the support of $A^h_{lm}$ is empty and is in the block form.  Let us assume that $q_1-p_1=l$.  For $(i,j)\in S$, if $q_j>q_1$ or $p_i<p_1$ holds, then $q_j-p_i>q_1-p_1=l$ and $(i,j)$
 is not an element of the support of $A^h_{lm}$.  This means that the support of $A^h_{lm}$ can be written as $\{(i,j)\in S\mid q_j=q_1,p_i=p_1\}$, and is in the block form. \qedhere
\end{proof}

\end{document}